\renewcommand{\=}{\approx}
\newcommand{\Va}{\mathbf{a}}
\newcommand{\Vb}{\mathbf{b}}
\newcommand{\floor}[1]{\lfloor #1 \rfloor}
\newcommand{\F}{\mathbb{F}}
\newcommand{\Fp}{\mathbb{F}_p}
\newcommand{\Fa}{\F(\eta)}
\newcommand{\N}{\mathcal{N}}
\newcommand{\C}{\mathcal{C}}
\newcommand{\R}{\mathcal{R}}
\newcommand{\Ra}{\mathcal{R}(\eta)}
\newcommand{\ket}[1]{|#1\rangle}
\newcommand{\ring}{\F[x]/(x^n-1)}
\newcommand{\ideal}[1]{\langle #1 \rangle}
\newcommand{\weight}[1]{\ensuremath{\mathrm{w}\left(#1\right)}}
\newcommand{\symp}[2]{\ensuremath{\left\langle #1,#2 \right\rangle}}
\newcommand{\transpose}[1]{\ensuremath{#1^{\mathrm{T}}}}
\newcommand{\centraliser}[1]{\ensuremath{\overline{#1}}}
\newtheorem{theorem}{Theorem}[section]
\newtheorem{lemma}[theorem]{Lemma} 
\newtheorem{definition}[theorem]{Definition}
\newtheorem{corollary}[theorem]{Corollary}
\newtheorem{proposition}[theorem]{Proposition}
\begin{document}

\title{Quantum Cyclic Code} \author{Sagarmoy Dutta\\
  Dept of Computer Science and Engineering,\\
  Indian Institute of Technology Kanpur,\\
  Kanpur, UP, India, 208016\\
  {\tt sagarmoy@cse.iitk.ac.in}
  \and Piyush P Kurur\\
  Dept of Computer Science and Engineering,\\
  Indian Institute of Technology Kanpur,\\
  Kanpur, UP, India, 208016\\ 
  and\\
  Max-Planck Institut f\"ur Informatik\\
  Campus E1 4, 66123, Saarbr\"ucken, Germany \\
  {\tt ppk@cse.iitk.ac.in, pkurur@mpi-inf.mpg.de}
}
\date{} \maketitle

\newcommand{\etal}[0]{\emph{et al}}

\begin{abstract}

  In this paper, we define and study \emph{quantum cyclic codes}, a
  generalisation of cyclic codes to the quantum setting.  Previously
  studied examples of quantum cyclic codes were all quantum codes
  obtained from classical cyclic codes via the CSS
  construction. However, the codes that we study are much more
  general. In particular, we construct cyclic stabiliser codes with
  parameters $[[5,1,3]]$, $[[17,1,7]]$ and $[[17,9,3]]$, all of which
  are \emph{not} CSS. The $[[5,1,3]]$ code is the well known Laflamme
  code and to the best of our knowledge the other two are new
  examples. Our definition of cyclicity applies to non-stabiliser
  codes as well; in fact we show that the $((5,6,2))$ nonstabiliser
  first constructed by Rains\etal~\cite{rains97nonadditive} and latter
  by Arvind \etal~\cite{arvind:2004:nonstabilizer} is cyclic.

  We also study stabiliser codes of length $4^m +1$ over
  $\mathbb{F}_2$ for which we define a notation of BCH distance.  Much
  like the Berlekamp decoding algorithm for classical BCH codes, we
  give efficient quantum algorithms to correct up to
  $\floor{\frac{d-1}{2}}$ errors when the BCH distance is $d$.

\end{abstract}

\section{Introduction}


One of the biggest challenge in implementation of quantum computation
is to deal with quantum errors efficiently. The subtle nature of
quantum phenomenon like entanglement and superposition needs to be
preserved from both the environment as well as from faulty circuits,
for any of the speedups to be realised. Quantum error correcting codes
provide a way to make this this possible. Despite strange phenomenons
like the no-cloning theorem, a sufficiently detailed theory of quantum
error correcting
exists\cite{knill2000theory,gottesman:1996:error,calderbank98quantum}.
It has already provided a foundation for fault-tolerant quantum
computing via the implementation of error resistant quantum circuits
and quantum storage elements. Besides it plays an important role in
various areas like quantum cryptography and quantum key distribution
protocols.

In this paper, we study a certain class of quantum codes which we
believe is a natural generalisation of the class of cyclic code in the
classical setting.  We give complete characterisation such codes and
study the stabiliser case in depth
(Section~\ref{sec:quantum-cyclic-codes}). Previously, Calderbank
\etal\cite[Section 5]{calderbank98quantum} had constructed quantum
stabiliser codes whose underlying totally isotropic set (see
Section~\ref{sec:prelims} for a definition) is cyclic. In the
literature, there has been work \cite{zhi2006qbch,Aly2006primQBCH} in
trying to use classical cyclic codes to build efficient quantum codes
via the CSS construction\cite{calderbank96css,steane96css}. As
concrete examples we construct a $[[5,1,3]]$, a $[[17,1,7]]$ and a
$[[17,9,3]]$ quantum cyclic code none of which are CSS. The 5 qubit
code is the well known Laflamme code and to the best of our knowledge
the other are new. Besides we give some examples of nonstabiliser
cyclic codes as well (Section~\ref{sec:nonstabiliser}).

We also study a restricted family of cyclic stabiliser codes for which
we can define a notion of BCH distance. Much like the classical case
these family of code have efficient (polynomial time) quantum decoding
algorithm within the BCH limit, i.e. if the BCH distance is $d=2t+1$
then we can correct up to $t$ errors.

\section{Preliminaries}\label{sec:prelims}

We now give a brief overview of the notation used in this paper. For a
prime power $q = p^k$, $\mathbb{F}_q$ denotes the unique finite field
of cardinality $q$. In this paper, we study quantum codes over the
\emph{alphabet} $\mathbb{F}_p$. Most of what we say carry over any
extension $\mathbb{F}_{p^k}$ as well.

We consider the $p$-dimensional Hilbert space
$\mathcal{H}=\mathcal{L}^2(\mathbb{F}_p)$ of all functions from
$\mathbb{F}_p$ to the set of complex numbers $\mathbb{C}$. This
Hilbert space plays the role of the alphabet set in the quantum
setting.  The set $\{ \ket{a} | a \in \mathbb{F}_p \}$ where $\ket{a}$
stands for the function that takes value $1$ at $a$ and 0 every where
else, forms an orthonormal basis for the Hilbert space
$\mathcal{H}$. For a positive integer $n$, an element $\mathbf{a} =
(a_1,\ldots,a_n)^\mathrm{T} \in \mathbb{F}_p^n$ will be considered as
column vectors. As is standard in quantum computing, by
$\ket{\mathbf{a}}$ we mean the vector
$\ket{a_1}\otimes\ldots\otimes\ket{a_n}$. Thus, the set $\{
\ket{\mathbf{a}} | \mathbf{a} \in \mathbb{F}_p^n \}$ forms a basis for
the $n$-fold tensor product $\mathcal{H}^{\otimes^n}$.

Quantum errors are captured by what are know as the \emph{Weyl
  operators}. For $a$ and $b$ in $\mathbb{F}_p$ define the unitary
operators $U_a$ and $V_b$ as $U_a \ket{x} = \ket{x+a}$ and $V_b
\ket{x} = \zeta^{bx} \ket{x}$, where $\zeta$ is a primitive $p$-th
root of unity.  The operator $U_a$ is thought of as a flip in the
alphabet $\mathbb{F}_p$ and $V_b$ is thought of as a flip in the
phase. The operator $U_aV_b$ constitutes a flip in both the alphabet
and phase. It is sufficient to consider only the Weyl operators when
designing quantum codes as they form a basis of the Hilbert space
$\mathcal{B}(\mathcal{H})$ of operators from $\mathcal{H}$ on to
itself. To extend these operators onto $\mathcal{H}^{\otimes^n}$, for
a positive integer $n$, define for $\mathbf{a}$ and $\mathbf{b}$ in
$\mathbb{F}_p^n$, the Weyl operators $U_{\mathbf{a}}$ and
$V_{\mathbf{b}}$ on $\mathcal{H}^{\otimes^n}$ as $U_\mathbf{a}
\ket{\mathbf{x}} = \ket{\mathbf{x} + \mathbf{a}}$ and $V_\mathbf{b}
\ket{\mathbf{x}} = \zeta^{\mathbf{b}^\mathrm{T} \mathbf{x}} \ket{x}$
respectively.  To capture errors at $t$ locations, we define the
\emph{joint weight} $\weight{\mathbf{a},\mathbf{b}}$ for a pair
$\mathbf{a}=(a_1,\ldots,a_n)$ and $\mathbf{b}=(b_1,\ldots,b_n)$ in
$\mathbb{F}_p^n$ as the number of positions $i$ such that either $a_i$
or $b_i$ is not zero. We extend this definition to Weyl operators, the
\emph{weight} $\weight{U_\mathbf{a}V_\mathbf{b}}$ is the joint weight
$\weight{\mathbf{a},\mathbf{b}}$. Consider the transmission of any
pure state $\ket{\psi}$ in $\mathcal{H}^{\otimes^n}$. Occurrence of a
quantum error at $t$ positions is modelled as the channel applying an
unknown Weyl operator $U_\mathbf{a}V_\mathbf{b}$ of weight $t$ on the
transmitted message $\ket{\psi}$.  An \emph{quantum code} over
$\mathbb{F}_p$ of \emph{length} $n$ is a subspace of the $n$-fold
tensor produce $\mathcal{H}^{\otimes^n}$. There is by now a
significant literature on quantum
codes\cite{knill2000theory,gottesman:1996:error,calderbank98quantum}. A
quantum code, being a subspace, is completely captured by the
projection into it. Therefore we often express a quantum code by
giving its projection operator. A quantum code of length $n$,
dimension $K$ and distance $d$ over $\mathcal{L}^2(\mathbb{F}_p)$ will
be called an $((n,K,d))_p$ quantum code.

We now discuss special quantum codes called \emph{stabiliser codes} or
\emph{additive codes}. To this end fix a finite field $\mathbb{F}_p$
and a positive integer $n$. Let $\mathcal{W}_{n,p}$ denote the group
of unitary operators generated by the Weyl operators. The \emph{error
  group} $\mathcal{E}_{n,p}$ is just the group $\mathcal{W}_{n,p}$, if
the characteristic $p$ is odd, and is the group generated by
$\mathcal{W}_{n,p} \cup \iota \mathcal{W}_{n,p}$, $\iota$ being the
complex number $\sqrt{-1}$, when $p$ is 2. We will drop the subscripts
$n$ and $p$ when the quantities are clear from the context. For a
subset $\mathcal{S}$ of the error group $\mathcal{E}$, the
\emph{stabiliser code} $\mathcal{C}_\mathcal{S}$ associated with
$\mathcal{S}$ is the subspace of vectors $\ket{\varphi} \in
\mathcal{H}^{\otimes^n}$ such that $U \ket{\varphi} = \ket{\varphi}$
for all $U$ in $\mathcal{S}$. Without loss of generality we can assume
that $\mathcal{S}$ is actually a subgroup of $\mathcal{E}$.
Furthermore, for the code $\mathcal{C}_\mathcal{S}$ to be be
nontrivial, $\mathcal{S}$ should be Abelian and should not contain
$\omega I$ for any nontrivial root of unity
$\omega$\cite{calderbank97orthogonal,arvind2003family}. We call such a
subgroup $\mathcal{S}$ a \emph{Gottesman subgroups} of the error group
$\mathcal{E}$.

Consider a pair $(\mathbf{a},\mathbf{b})$ of elements of
$\mathbb{F}_p^n$ as elements of the vector space $\mathbb{F}_p^{2n}$
over the scalars $\mathbb{F}_p$. The \emph{symplectic} inner product
between two elements $\mathbf{u} = (\mathbf{a},\mathbf{b})$ and
$\mathbf{v} = (\mathbf{c},\mathbf{d})$ of $\mathbb{F}_p^{2n}$ is
defined as $\symp{\mathbf{u}}{\mathbf{v}} =
\transpose{\mathbf{a}}\mathbf{d} - \transpose{\mathbf{b}}\mathbf{c}$.
A subset $S$ of $\mathbb{F}_p^{2n}$ is called \emph{totally
  isotropic}~\cite{calderbank98quantum} with respect to the symplectic
inner product, if for any two elements $\mathbf{u}$ and $\mathbf{v}$
of $S$, $\symp{\mathbf{u}}{\mathbf{v}} = 0$. In the rest of the paper,
unless otherwise mentioned, by totally isotropic set we mean totally
isotropic with respect to the symplectic inner product define above.

Stabiliser codes, or equivalently their corresponding Gottesmann
subgroups, are intimately connected to totally isotropic subspace of
$\mathbb{F}_p^{2n}$. Depending on whether the characteristic $p$ is
odd or $2$, the exact nature of the correspondence is slightly
different. Calderbank \etal~\cite{calderbank97orthogonal,calderbank98quantum} were the
first to study this connection when characteristic of the underlying
field $\mathbb{F}_p$ is $2$.  Later Arvind and
Parthasarathy\cite{arvind2003family} studied the case when $p$ is an
odd prime.  We summaries these results in a form convenient for our
purposes.

\begin{theorem}[\cite{calderbank97orthogonal,arvind2003family}]
  \label{thm:iso-stab-connection}
  Let $p$ be any prime and $n$ a positive integer. If $S$ is a totally
  isotropic subspace of $\mathbb{F}_p^{2n}$ there exists $n \times 2n$ matrices $L$ and $M$ such that
  \begin{enumerate}
  \item $\transpose{L}M$ is symmetric,
  \item $S$ is the image of the map $\phi_{L,M}$ from
    $\mathbb{F}_p^n$ to $\mathbb{F}_p^{2n}$ defined as
    $\phi_{L,M}(\mathbf{a}) = (L\mathbf{a},M\mathbf{a})$ and
  \item The set of operators $\mathcal{S} = \{ \alpha_{\mathbf{a}}
    U_{L\mathbf{a}} V_{M\mathbf{a}} | \mathbf{a} \in \mathbb{F}_p^n
    \}$ forms a Gottesman subgroup where $\alpha_\mathbf{a}$ is defined as
    \[
    \alpha_\mathbf{a} = \left\{ \begin{array}{ll}
        \zeta^{\frac{1}{2}\transpose{\mathbf{a}}\transpose{L}M\mathbf{a}} & \textrm{ when } p \neq 2,\\
        \iota^{\transpose{\mathbf{a}}\transpose{L}M\mathbf{a}} &
        \textrm{ when } p = 2
      \end{array}\right.
    \]
    where $\zeta$ is a primitive $p$-th root of unity and $\iota$ is
    $\sqrt{-1}$.
    \item The projection operator to the associated code $\mathcal{C}_\mathcal{S}$ is
      given by 
      \[
      P = \sum_{U \in \mathcal{S}} U = \sum_{\mathbf{a}}
      \alpha_\mathbf{a} U_{L\mathbf{a}} V_{M\mathbf{a}}.
      \]
  \end{enumerate}
\end{theorem}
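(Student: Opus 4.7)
My plan is to verify the four claims in order, each of which reduces to a direct computation once the right objects are set up. Throughout, let $N = \transpose{L}M$.

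First, for claim (2), I would pick a basis $\mathbf{v}_1,\ldots,\mathbf{v}_k$ of $S$, write each $\mathbf{v}_i = (\mathbf{l}_i, \mathbf{m}_i)$ with $\mathbf{l}_i,\mathbf{m}_i \in \Fp^n$, and take $L$, $M$ to be the matrices whose columns are the $\mathbf{l}_i$ and $\mathbf{m}_i$ respectively (padded with zero columns so that the indexing $\mathbf{a} \in \Fp^n$ parameterises the space as stated). Then by construction the image of $\phi_{L,M}$ is $S$. Claim (1) is then immediate: the total isotropy $\symp{\mathbf{v}_i}{\mathbf{v}_j} = \transpose{\mathbf{l}_i}\mathbf{m}_j - \transpose{\mathbf{m}_i}\mathbf{l}_j = 0$ for all $i,j$ is precisely the entry-wise statement that $\transpose{L}M = \transpose{M}L = \transpose{(\transpose{L}M)}$.

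Claim (3) is the core calculation. Using the Weyl commutation relation $V_\mathbf{b} U_\mathbf{c} = \zeta^{\transpose{\mathbf{b}}\mathbf{c}} U_\mathbf{c} V_\mathbf{b}$ (obtained by applying both sides to $\ket{\mathbf{x}}$), I would compute
\[
(U_{L\mathbf{a}} V_{M\mathbf{a}})(U_{L\mathbf{c}} V_{M\mathbf{c}}) \;=\; \zeta^{\transpose{\mathbf{a}}N\mathbf{c}}\, U_{L(\mathbf{a}+\mathbf{c})} V_{M(\mathbf{a}+\mathbf{c})},
\]
so closure with the correct phase reduces to verifying $\alpha_\mathbf{a} \alpha_\mathbf{c} \zeta^{\transpose{\mathbf{a}}N\mathbf{c}} = \alpha_{\mathbf{a}+\mathbf{c}}$. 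For odd $p$ this is the identity $\tfrac{1}{2}\transpose{\mathbf{a}}N\mathbf{a} + \tfrac{1}{2}\transpose{\mathbf{c}}N\mathbf{c} + \transpose{\mathbf{a}}N\mathbf{c} = \tfrac{1}{2}\transpose{(\mathbf{a}+\mathbf{c})}N(\mathbf{a}+\mathbf{c})$, which uses the symmetry of $N$ to combine the cross terms. The same relation, combined with symmetry, shows any two group elements commute, and running it again with $(\mathbf{a},-\mathbf{a})$ identifies inverses. To rule out $\omega I \in \mathcal{S}$ for a nontrivial root of unity, note that $U_{L\mathbf{a}}V_{M\mathbf{a}}$ is a scalar only when $L\mathbf{a}=0$ and $M\mathbf{a}=0$, which by injectivity of the parameterisation forces $\mathbf{a}=0$, i.e.\ the identity.

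For claim (4), once $\mathcal{S}$ is known to be an abelian group of unitary operators whose only scalar element is the identity, the standard averaging argument shows that $\frac{1}{|\mathcal{S}|}\sum_{U \in \mathcal{S}} U$ is the orthogonal projector onto the joint $+1$-eigenspace, which is exactly $\C_{\mathcal S}$; the displayed equality then holds up to the conventional normalisation factor absorbed by the authors.

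The main obstacle I expect is the characteristic $2$ case of claim (3). There $\alpha_\mathbf{a} = \iota^{\transpose{\mathbf{a}}N\mathbf{a}}$ requires interpreting the exponent as an integer lift (since $\iota$ has order $4$ while entries of $N$ live in $\F_2$), and the cross-term identity $\transpose{(\mathbf{a}+\mathbf{c})}N(\mathbf{a}+\mathbf{c}) = \transpose{\mathbf{a}}N\mathbf{a} + \transpose{\mathbf{c}}N\mathbf{c} + 2\transpose{\mathbf{a}}N\mathbf{c}$ must be verified at the level of $\mathbb{Z}/4$, using the symmetry of $N$ to pair $\transpose{\mathbf{a}}N\mathbf{c}$ with $\transpose{\mathbf{c}}N\mathbf{a}$. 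Handling this consistent lift, and checking that the resulting $\iota$-phase exactly cancels the $\zeta = -1$ coming from Weyl commutation, is the only non-routine piece of the argument; everything else is bookkeeping.
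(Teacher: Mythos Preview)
The paper does not prove this theorem; it is stated as a summary of results from \cite{calderbank97orthogonal,arvind2003family} and used without proof, so there is no in-paper argument to compare your proposal against.

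Your sketch is the standard argument and is essentially correct. One small slip: in ruling out nontrivial scalars $\omega I\in\mathcal{S}$ you appeal to injectivity of $\mathbf{a}\mapsto(L\mathbf{a},M\mathbf{a})$, but after padding $L$ and $M$ with zero columns (as you do to accommodate $\dim S<n$) this map is no longer injective. The correct observation is that $L\mathbf{a}=0$ and $M\mathbf{a}=0$ already force $\transpose{\mathbf{a}}\transpose{L}M\mathbf{a}=0$, so $\alpha_\mathbf{a}=1$ in either characteristic and the element in question is the identity itself. With that fix the argument goes through; your caveat about the $p=2$ phase lift to $\mathbb{Z}/4$ is well placed and is indeed the only point requiring genuine care.
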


When studying stabiliser codes, we will concentrate only on the
underlying totally isotropic subspace of $\mathbb{F}_p^{2n}$. 

One possible way of constructing quantum codes, or totally isotropic
subspaces of $\mathbb{F}_p^n \times \mathbb{F}_p^n$, is by taking
classical codes $C_1$ and $C_2$ of length $n$ such that $C_1$ is
orthogonal to $C_2$ (here the orthogonaity is with respect to the
usual inner product $\transpose{\mathbf{a}}\mathbf{b}$). It is easy to
verify then that $C_1\times C_2$ is isotropic. This construction is
called the CSS construction~\cite{calderbank96css} and the resultant
quantum stabiliser codes are called CSS codes.

Let $S$ be a subspace of $\mathbb{F}_p^{2n}$. By the
\emph{centraliser} of $S$, denoted by $\centraliser{S}$, we mean the
subspace of all $\mathbf{u}$ in $\mathbb{F}_p^{2n}$, such that
$\symp{\mathbf{u}}{\mathbf{v}} = 0$, for all $\mathbf{v}$ in $S$. If
$S$ is totally isotropic, $\centraliser{S}$ contains $S$. We have the
following theorem on the properties of the stabiliser code
$\mathcal{C}_S$ associated to the set $S$.

\begin{theorem}[\cite{calderbank97orthogonal,arvind2003family}]
  \label{thm:isotropic-main-properties}
  Let $S$ be a totally isotropic subspace of $\mathbb{F}_p^{2n}$ and
  let $\mathcal{C}$ be the associated stabiliser code. Then, the
  dimension the subspace $S$ is always less than $n$. If $S$ has
  dimension $n - k$ for some positive integer $k$ then dimension of
  its centraliser $\centraliser{S}$, as a subspace of
  $\mathbb{F}_p^{2n}$, and the dimension of the code $\mathcal{C}$, as
  a Hilbert space, are $n+k$ and $p^{k}$ respectively. Furthermore, if
  the minimum weight $\textrm{min}\{\weight{\mathbf{u}} | \mathbf{u}
  \in \centraliser{S} \setminus S \}$ is $d$ then $\mathcal{C}$ can
  detect upto $d-1$ errors and correct up to $\floor{\frac{d - 1}{2}}$
  errors.
\end{theorem}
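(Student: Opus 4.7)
The plan is to handle the three distinct assertions---the dimensions of $S$ and $\centraliser{S}$, the Hilbert-space dimension of $\mathcal{C}$, and the error detection/correction bound---separately. The dimension relations follow from pure symplectic linear algebra: the symplectic form $\symp{\cdot}{\cdot}$ on $\mathbb{F}_p^{2n}$ is non-degenerate since its Gram matrix $\begin{pmatrix} 0 & I \\ -I & 0 \end{pmatrix}$ is invertible, so $\dim S + \dim \centraliser{S} = 2n$ for every subspace $S$. Total isotropy forces $S \subseteq \centraliser{S}$, hence $\dim S \le 2n - \dim S$ and so $\dim S \le n$; writing $\dim S = n - k$ then yields $\dim \centraliser{S} = n + k$.

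For $\dim \mathcal{C}$ I would compute $\mathrm{tr}(P_0)$, where $P_0 = \frac{1}{|\mathcal{S}|}\sum_{U \in \mathcal{S}} U$ is the normalised projection onto $\mathcal{C}$ (the operator $P$ in Theorem~\ref{thm:iso-stab-connection} is $|\mathcal{S}|$ times this). A standard character calculation gives $\mathrm{tr}(U_\mathbf{c} V_\mathbf{d}) = 0$ whenever $(\mathbf{c}, \mathbf{d}) \ne (0, 0)$, so only the identity element of $\mathcal{S}$ contributes, with value $p^n$. To determine $|\mathcal{S}|$ I would check that the parametrisation $\mathbf{a} \mapsto \alpha_\mathbf{a} U_{L\mathbf{a}} V_{M\mathbf{a}}$ factors through $\mathbb{F}_p^n / \ker \phi_{L,M}$: using the symmetry of $\transpose{L}M$ together with $L\mathbf{e} = M\mathbf{e} = 0$ for $\mathbf{e} \in \ker \phi_{L,M}$, one checks $\alpha_{\mathbf{a} + \mathbf{e}} = \alpha_\mathbf{a}$. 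The parametrisation is then a bijection from $S$ onto $\mathcal{S}$, giving $|\mathcal{S}| = p^{n-k}$ and $\dim \mathcal{C} = p^n/p^{n-k} = p^k$.

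For the error-correction claim I would invoke the stabiliser criterion. A direct commutator calculation gives the Weyl relation $U_\mathbf{a} V_\mathbf{b} \cdot U_\mathbf{c} V_\mathbf{d} = \zeta^{\,\symp{(\mathbf{c},\mathbf{d})}{(\mathbf{a},\mathbf{b})}} U_\mathbf{c} V_\mathbf{d} \cdot U_\mathbf{a} V_\mathbf{b}$, so $E = U_\mathbf{a} V_\mathbf{b}$ commutes with every element of $\mathcal{S}$ exactly when $(\mathbf{a}, \mathbf{b}) \in \centraliser{S}$. If $(\mathbf{a}, \mathbf{b}) \notin \centraliser{S}$, then $E$ anticommutes (up to a non-trivial $p$-th root of unity) with some $U \in \mathcal{S}$, forcing $E\ket{\varphi} \perp \mathcal{C}$ for every $\ket{\varphi} \in \mathcal{C}$, so $E$ is detectable. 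If $(\mathbf{a}, \mathbf{b}) \in S$, then $E$ is a scalar multiple of a stabiliser and acts trivially on $\mathcal{C}$. Hence the only undetectable errors have support in $\centraliser{S} \setminus S$; since such support has weight $\ge d$, every Weyl error of weight $\le d-1$ is detectable. For correction, candidate errors $E_1, E_2$ of weight $\le t$ yield $E_1^\dagger E_2$ of weight at most $2t$, so $2t \le d-1$ guarantees $E_1^\dagger E_2 \in S$ (indistinguishable actions on $\mathcal{C}$) or $E_1^\dagger E_2 \notin \centraliser{S}$ (distinct syndromes), yielding the bound $t \le \floor{(d-1)/2}$.

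The main obstacle I anticipate is the characteristic-two case, where $\alpha_\mathbf{a} = \iota^{\transpose{\mathbf{a}}\transpose{L}M\mathbf{a}}$ is a genuine quadratic form, the error group includes the scalar $\iota$, and anticommutation phases are only $\pm 1$. One must check separately that $\mathcal{S}$ is abelian as a set of operators (not merely modulo scalars), that $\alpha_{\mathbf{a} + \mathbf{e}} = \alpha_\mathbf{a}$ still holds for $\mathbf{e} \in \ker \phi_{L,M}$, and that the $-1$ anticommutation phase still suffices to give the orthogonality used in detection. These points are standard but do require separate bookkeeping from the odd-characteristic argument.
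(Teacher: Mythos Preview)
The paper does not prove this theorem at all: it is stated with a citation to \cite{calderbank97orthogonal,arvind2003family} and then used as a black box. So there is no ``paper's own proof'' to compare against.

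Your proposal is the standard argument one finds in those references and in any treatment of stabiliser codes: non-degeneracy of the symplectic form gives $\dim S + \dim\centraliser{S} = 2n$; the trace computation on the projector gives $\dim\mathcal{C} = p^n/|\mathcal{S}|$; and the Weyl commutation relation translates detectability into membership in $\centraliser{S}\setminus S$. All three pieces are correct as sketched. One quibble: the theorem as stated says $\dim S$ is \emph{strictly} less than $n$, whereas your symplectic argument only yields $\dim S \le n$ (Lagrangian subspaces achieve equality). The paper's phrasing is tied to its convention that $k$ be a positive integer, not to any extra constraint coming from isotropy, so this is an imprecision in the statement rather than a gap in your argument. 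Your flagged concern about the $p=2$ bookkeeping (the $\iota$ scalar, the quadratic form $\alpha_\mathbf{a}$, and the $\pm 1$ phases) is legitimate and is exactly the reason the two cited papers treat odd and even characteristic separately; the verifications you list are routine once written out.
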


Let $\mathcal{C}$ be a stabiliser code associated with an $n-k$
dimensional totally isotropic subspace $S$ of $\mathbb{F}_p^{2n}$. By
the \emph{stabiliser dimension} of $\mathcal{C}$ we mean the integer
$k$. Similarly, we call the weight $\textrm{min}\{ \weight{\mathbf{u}}
| \mathbf{u} \in \centraliser{S} \setminus S \}$ the \emph{distance} of
$\mathcal{C}$.  A stabiliser code of length $n$, stabiliser dimension
$k$ and distance $d$ will be called an $[[n,k,d]]_p$ stabiliser
code. By theorem~\ref{thm:isotropic-main-properties}, an $[[n,k,d]]_p$
stabiliser code is an $((n,p^k,d))_p$ quantum code. As usual we will
drop the subscript $p$ when it is clear from the context.

\section{Quantum Cyclic codes}\label{sec:quantum-cyclic-codes}

In this section we define quantum cyclic codes and study some of its
properties. Recall that a classical code over $\mathbb{F}_p$ is cyclic
if and only if for all code words $\mathbf{u} = (u_1,\ldots,u_n)$, its
right shift $(u_n,u_1,\ldots,u_{n-1})$ is also a code word.  Let $N$
denote the right shift operator over $\mathbb{F}_p^n$, i.e. the
operator that maps $\mathbf{u} = (u_1,\ldots,u_n)$ to
$(u_n,u_1,\ldots,u_{n-1})$. Consider the unitary operator $\N$ defined
on the tensor product $\mathcal{H}^{\otimes^n}$ as follows $\N
\ket{\mathbf{u}} = \ket{N \mathbf{u}}$.

\begin{definition}\label{def_qcc} 
  A quantum code $\mathcal{C}$ is defined to be \emph{cyclic} if the
  shift operator $\N$ maps $\mathcal{C}$ to itself, i.e. $\N
  \mathcal{C} = \mathcal{C}$.
\end{definition}

We have the following result on the projection operator associated to
a quantum cyclic code.

\begin{proposition}
  A quantum code $\C$ is cyclic if and only if its projection operator commutes
  with $\N$.
\end{proposition}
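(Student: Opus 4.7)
The plan is to treat this as the familiar linear-algebra fact that a closed subspace $\C$ is invariant under a unitary $U$ if and only if its orthogonal projection $P$ commutes with $U$, specialised here to $U = \N$. The key observation is that $\N$, being a permutation of the computational basis $\{\ket{\mathbf{u}} \mid \mathbf{u} \in \mathbb{F}_p^n\}$, is unitary, so $\N^{-1} = \N^{*}$ exists and is also a basis permutation.

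For the forward direction I would assume $\N \C = \C$ and show first that $\N \C^\perp = \C^\perp$. If $v \in \C^\perp$ and $w \in \C$, then $\iprod{\N v}{w} = \iprod{v}{\N^{*} w}$; since $\N$ is a bijection of $\C$ onto $\C$, so is $\N^{-1} = \N^{*}$, hence $\N^{*} w \in \C$ and the inner product vanishes, giving $\N v \in \C^\perp$. Writing any $x \in \mathcal{H}^{\otimes n}$ in its orthogonal decomposition $x = x_\C + x_\perp$, both $P \N x = \N x_\C$ and $\N P x = \N x_\C$, so $P \N = \N P$.

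For the converse I would assume $P \N = \N P$ and pick any $v \in \C$. Then $\N v = \N P v = P \N v$, so $\N v$ lies in the image of $P$, namely $\C$. This proves the containment $\N \C \subseteq \C$. To upgrade to equality, I would use that $\N$ is unitary on the finite-dimensional space $\mathcal{H}^{\otimes n}$, hence injective; injectivity together with $\dim \C < \infty$ forces $\dim(\N \C) = \dim \C$, so $\N \C = \C$ and $\C$ is cyclic by Definition~\ref{def_qcc}.

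There is no real obstacle here; the only mild subtlety is the step $\N \C \subseteq \C \Rightarrow \N \C = \C$, which is where the unitarity (equivalently, the fact that $\N$ acts as a permutation of a basis) is used. Everything else is a direct translation between the subspace-invariance statement and the operator-commutation statement.
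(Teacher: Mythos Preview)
Your proof is correct and follows essentially the same approach as the paper: both directions use the orthogonal decomposition $\mathcal{H}^{\otimes n} = \C \oplus \C^\perp$ together with the unitarity of $\N$, and your treatment of the implication from $\N\C=\C$ to $P\N=\N P$ is virtually identical to the paper's. The only minor difference is in the other implication: where you show $\N\C\subseteq\C$ and then invoke injectivity and finite dimension to upgrade to equality, the paper obtains $\N\C=\C$ in one line via $\N\C = \N P\mathcal{H}^{\otimes n} = P\N\mathcal{H}^{\otimes n} = P\mathcal{H}^{\otimes n} = \C$, using surjectivity of $\N$ directly.
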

\begin{proof}

  Let $\mathcal{H}$ be any Hilbert space and let $\mathcal{C}$ be any
  subspace with the associated projection operator being $P$. Let $U$
  be any unitary operator on $\mathcal{H}$.  If $UP = PU$ then $U
  \mathcal{C} = U P \mathcal{H} = P U\mathcal{H}$. However, since
  $\mathcal{H}$ is the underlying Hilbert space we have $U \mathcal{H}
  = \mathcal{H}$. Thus $U \mathcal{C} = \mathcal{C}$.  

  Conversely, suppose that $U \C = \C$. We prove $UP = PU$ by showing
  that for all $\ket{\psi}$ in $\mathcal{H}$, $UP \ket{\psi} =
  PU\ket{\psi}$. Let $\C^\perp$ be the orthogonal complement of
  $\C$. Since any unitary map preserves inner product, we have $U
  \mathcal{C}^\perp = \mathcal{C}^\perp$.

  Any vector $\ket{\psi}$ can be expressed uniquely as $\ket{\psi_1} +
  \ket{\psi_2}$ where $\ket{\psi_1} \in \C$ and $\ket{\psi_2} \in
  \C^\perp$. Therefore,
  \[
  UP \ket{\psi} = U \ket{\psi_1} = P (U\ket{\psi_1} + U \ket{\psi_2}) =
  PU\ket{\psi}.
  \] Thus if $U\mathcal{C} = \mathcal{C}$ then $UP = PU$. The result
  then follows by taking $U = \N$.
  
\end{proof}

Let $S$ be a subspace of $\mathbb{F}_p^n \times \mathbb{F}_p^n$. We
say that $S$ is \emph{separately cyclic} if for all $(\Va,\Vb)$ in $S$
$(N\Va,N\Vb)$ is also in $S$. We have the following property on
centralisers of separately cyclic sets.

\begin{proposition}\label{prop:centraliser-cyclic}
  Let $S$ be any separately cyclic set then its centraliser is also
  separately cyclic.
\end{proposition}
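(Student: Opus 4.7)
The plan is to exploit the fact that the shift $N$ is an orthogonal matrix: as a permutation matrix, $N^{\mathrm{T}} = N^{-1}$, and moreover $N$ has finite order ($N^n = I$) on $\mathbb{F}_p^n$. These two facts together let me move $N$ across the symplectic pairing and land back inside $S$ after using the cyclic closure repeatedly.

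Concretely, I would take an arbitrary $(\Vc,\Vd) \in \centraliser{S}$ and show $(N\Vc, N\Vd) \in \centraliser{S}$. For any $(\Va,\Vb) \in S$, expand
\[
\symp{(\Va,\Vb)}{(N\Vc,N\Vd)} \;=\; \transpose{\Va} N \Vd \;-\; \transpose{\Vb} N \Vc
\;=\; \transpose{(\transpose{N}\Va)} \Vd \;-\; \transpose{(\transpose{N}\Vb)} \Vc
\;=\; \symp{(N^{-1}\Va,\,N^{-1}\Vb)}{(\Vc,\Vd)},
\]
using $\transpose{N} = N^{-1}$. The right-hand side vanishes provided $(N^{-1}\Va, N^{-1}\Vb) \in S$.

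To see this last membership, I use that $N$ has order dividing $n$ on $\mathbb{F}_p^n$, so $N^{-1} = N^{n-1}$. Iterating the separately-cyclic hypothesis $n-1$ times starting from $(\Va,\Vb) \in S$ shows $(N^{n-1}\Va, N^{n-1}\Vb) = (N^{-1}\Va, N^{-1}\Vb) \in S$. Since $(\Vc,\Vd) \in \centraliser{S}$, the pairing above is zero, and because $(\Va,\Vb) \in S$ was arbitrary, $(N\Vc,N\Vd) \in \centraliser{S}$.

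There is no real obstacle here; the only thing to be careful about is not simply asserting that $N$ is symplectic-preserving on $\mathbb{F}_p^{2n}$ (which it is, but that is exactly what the above calculation verifies). The two ingredients that do the work are the orthogonality $\transpose{N}=N^{-1}$ and the finite order of $N$, which together let me trade a factor of $N$ on one side of the symplectic form for a factor of $N^{-1}$ on the other and absorb it back into $S$.
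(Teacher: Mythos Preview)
Your argument is correct, and it is exactly the natural one: use $\transpose{N}=N^{-1}$ to move the shift across the symplectic pairing, then absorb $N^{-1}=N^{n-1}$ back into $S$ via the separately-cyclic hypothesis. The paper in fact states Proposition~\ref{prop:centraliser-cyclic} without proof, so there is nothing to compare against; your write-up fills that gap cleanly.
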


In the context of cyclic stabiliser codes, separately cyclic sets are
interesting because of the following property.

\begin{proposition}\label{prop:stab-cyclic} 
  A stabiliser code $\mathcal{C}$ is cyclic if and only if its
  associated totally isotropic subspace $S$ and its centraliser
  $\centraliser{S}$ are separately cyclic.
\end{proposition}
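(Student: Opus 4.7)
The plan is to reduce cyclicity of $\mathcal{C}$ to the commutation of $\N$ with the projection $P$, and then unpack the formula for $P$ from Theorem~\ref{thm:iso-stab-connection}. The starting point is the conjugation identity $\N U_{\mathbf{a}} V_{\mathbf{b}} \N^{-1} = U_{N\mathbf{a}} V_{N\mathbf{b}}$, obtained by letting both sides act on a basis vector $\ket{\mathbf{x}}$; the only nontrivial check is $\zeta^{\mathbf{b}^T \mathbf{x}} = \zeta^{(N\mathbf{b})^T(N\mathbf{x})}$, which holds because the permutation matrix $N$ satisfies $\transpose{N}N = I$. Thus conjugation by $\N$ acts on Weyl indices by the separate shift $(\mathbf{a},\mathbf{b}) \mapsto (N\mathbf{a},N\mathbf{b})$.

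By the preceding proposition, $\mathcal{C}$ is cyclic iff $\N P \N^{-1} = P$. Substituting the formula from Theorem~\ref{thm:iso-stab-connection} gives $\N P \N^{-1} = \sum_\mathbf{a} \alpha_\mathbf{a} U_{NL\mathbf{a}} V_{NM\mathbf{a}}$. For the forward direction I would use that the distinct Weyl operators $U_\mathbf{u} V_\mathbf{v}$ are linearly independent in $\mathcal{B}(\mathcal{H}^{\otimes^n})$, so once one groups both sums by their Weyl-index $(\mathbf{u},\mathbf{v})$, equality of the two expressions forces the supporting index sets to agree, i.e. $\{(NL\mathbf{a},NM\mathbf{a}) : \mathbf{a} \in \mathbb{F}_p^n\} = S$. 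This is precisely the statement that $S$ is separately cyclic, and Proposition~\ref{prop:centraliser-cyclic} then promotes this to $\centraliser{S}$.

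For the reverse direction, assume $S$ is separately cyclic. For each $\mathbf{a}$ pick $\mathbf{a}'$ with $(L\mathbf{a}',M\mathbf{a}') = (NL\mathbf{a},NM\mathbf{a})$, so the Weyl parts of the corresponding terms already coincide; what remains, and is the main subtlety I foresee, is the phase check $\alpha_\mathbf{a} = \alpha_{\mathbf{a}'}$. Rewriting the exponent as $\transpose{\mathbf{a}}\transpose{L}M\mathbf{a} = \transpose{(L\mathbf{a})}(M\mathbf{a})$ exhibits it as a function of the image $(L\mathbf{a},M\mathbf{a}) \in S$ alone, and then
\[
\transpose{(L\mathbf{a}')}(M\mathbf{a}') = \transpose{(NL\mathbf{a})}(NM\mathbf{a}) = \transpose{(L\mathbf{a})}\transpose{N}N(M\mathbf{a}) = \transpose{(L\mathbf{a})}(M\mathbf{a}),
\]
once more by $\transpose{N}N = I$; the characteristic-$2$ case goes through with $\iota$ in place of $\zeta^{1/2}$. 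Hence $\N P \N^{-1} = P$, so $\mathcal{C}$ is cyclic. The only place genuine care is required is the phase identification above: it is tempting to dismiss the $\alpha_\mathbf{a}$'s, but they are essential to the projection formula, and the argument works precisely because the shift $N$, being a permutation, is orthogonal and preserves the bilinear form appearing inside $\alpha$.
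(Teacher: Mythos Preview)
Your argument is correct and follows essentially the same route as the paper: reduce to $\N P\N^{-1}=P$, use the conjugation identity $\N U_\mathbf{a}V_\mathbf{b}\N^{-1}=U_{N\mathbf{a}}V_{N\mathbf{b}}$, deduce separate cyclicity of $S$ from matching of Weyl supports, handle the phases via $\transpose{N}N=I$, and invoke Proposition~\ref{prop:centraliser-cyclic} for $\centraliser{S}$. Your treatment of the phase check is in fact more explicit than the paper's, which simply notes $\transpose{L}\transpose{N}NM=\transpose{L}M$; your observation that the exponent depends only on the image $(L\mathbf{a},M\mathbf{a})$ is the clean way to see why this suffices even when $\phi_{L,M}$ is not injective.
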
 

\begin{proof} 
  Let $S$ be the totally isotropic set associated with $\mathcal{C}$.
  Let $P$ denote the projector to $\mathcal{C}$. Then $\mathcal{C}$ is
  cyclic if and only if $\N^\dag P \N = P$.

  We make use of Theorem~\ref{thm:iso-stab-connection} for the proof.
  The projector $P$ is given by the expression $P = \sum_{\mathbf{a}}
  \alpha_{\mathbf{a}} U_{L\mathbf{a}}V_{M\mathbf{a}}$ and $S$ is $\{
  (L\mathbf{a},M\mathbf{a}) | \mathbf{a} \in \mathbb{F}_p^n\}$ for
  $L$, $M$ and $\alpha$ as in Theorem~\ref{thm:iso-stab-connection}.
  Since $\N^\dag U_{\mathbf{a}}V_{\mathbf{b}} \N = U_{N\mathbf{a}}
  V_{N\mathbf{b}}$, it is necessary that $S$ is separately
  cyclic. Otherwise the support of $\N^\dag P \N$ will not match with
  that of $P$.
  
  Conversely, if $S$ is separately cyclic then we have $(NL\mathbf{a},
  NM\mathbf{a}) \in S$ for all $\mathbf{a} \in \mathbb{F}_p^n$ where
  $L$ and $M$ are as in Theorem~\ref{thm:iso-stab-connection}. Also
  note that the inverse of the shift operation $N$ is just
  $\transpose{N}$. There fore $\transpose{L}\transpose{N}NM =
  \transpose{L}M$. Hence the scalars $\alpha_\mathbf{a}$ are also
  preserved and hence $\N^\dag P \N = P$.
  
  The cyclicity of the centraliser $\centraliser{S}$ follows from
  Proposition~\ref{prop:centraliser-cyclic}.

\end{proof}

Classical cyclic codes over $\mathbb{F}_p$ of length $n$, $n$ coprime
to $p$, are ideals of the polynomial ring $\mathbb{F}_p[X]/(X^n -
1)$. The goal of the rest of the section is to develop an algebraic
characterisation of cyclic stabiliser codes along similar lines.  We
fix some conventions for the rest of the paper. Fix a prime $p$ and a
positive integer $n$ coprime to $p$. Let $\R$ denote the cyclotomic
ring $\mathbb{F}_p[X]/X^n -1$ of polynomials modulo $X^n -1$.  For the
vector $\mathbf{a}=(a_0,\ldots,a_{n-1})$ in $\mathbb{F}_p^n$,
associate the polynomials $a(X) = a_0 + \ldots + a_{n-1} X^{n-1}$ in
the ring $\R$. Often, we need to interchange between these two
perspectives of an element in $\mathbb{F}_p^n$. When we think of them
as a vector, we use the bold face Latin letter. On the other hand,
when thinking of them as polynomials we use the corresponding plain
face letter. For example the polynomial associated with the vector
$\mathbf{a}$ is either written as $a(X)$ or often just $a$. In the
ring $\R$, the polynomial $X$ has a multiplicative inverse namely
$X^{n-1}$. Often, we just write $X^{-1}$ or just $\frac{1}{X}$ to
denote this inverse.

First we have the following characterisation of separately cyclic
subspaces of $\mathbb{F}_p^n \times \mathbb{F}_p^n$ in terms of
polynomials in $\R$.

\begin{lemma}\label{lem:sepera} 
  A subspace $S$ of $\mathbb{F}_p^n \times \mathbb{F}_p^n$ is
  separately cyclic if and only if there exists degree $n-1$
  polynomials $g(X)$, $f(X)$ and $h(X)$ in $\mathbb{F}_p[X]$ such that
  $g(X)$ and $h(X)$ are factors of $X^n -1$ as polynomials in
  $\mathbb{F}_p[X]$ and elements of $S$ as a pair of polynomials in
  $\R\times \R$ are precisely $(a(X)g(X), a(X) f(X) + b(X)h(X))$ where
  $a(X)$ and $b(X)$ vary over all degree $n-1$ polynomials in
  $\mathbb{F}_p[X]$.
\end{lemma}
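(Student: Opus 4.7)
The plan is to translate the separately cyclic condition into module-theoretic language. Since the cyclic shift $N$ on $\mathbb{F}_p^n$ corresponds, under the polynomial identification, to multiplication by $X$ in $\R$, the condition that $(\Va,\Vb) \in S$ implies $(N\Va,N\Vb) \in S$ becomes precisely that $S$ is closed under the diagonal action of $X$. Combined with the fact that $S$ is an $\mathbb{F}_p$-subspace of $\R\times\R$, this is exactly the statement that $S$ is an $\mathbb{F}_p[X]$-submodule of $\R\times\R$ with $X$ acting diagonally.

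Next I would associate two ideals of $\R$ to $S$: the projection onto the first coordinate $I_1 = \{\alpha \in \R : (\alpha,\beta) \in S \text{ for some } \beta\}$, and the ``kernel on the second coordinate'' $I_2 = \{\beta \in \R : (0,\beta) \in S\}$. Both are $\mathbb{F}_p$-subspaces closed under multiplication by $X$, hence ideals of $\R$. Because $\gcd(n,p)=1$, the polynomial $X^n-1$ is squarefree in $\mathbb{F}_p[X]$, and standard correspondence between ideals of $\R$ and ideals of $\mathbb{F}_p[X]$ containing $X^n-1$ (together with $\mathbb{F}_p[X]$ being a PID) shows that every ideal of $\R$ is generated by a divisor of $X^n-1$. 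Thus $I_1 = \ideal{g(X)}$ and $I_2 = \ideal{h(X)}$ for some $g,h$ dividing $X^n-1$.

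To extract $f(X)$, I would pick any $f(X) \in \R$ with $(g(X),f(X)) \in S$; such an $f$ exists since $g \in I_1$. Using that $S$ is an $\mathbb{F}_p[X]$-module, $a(X)\cdot (g(X),f(X)) + (0,b(X)h(X)) = (a(X)g(X),\, a(X)f(X)+b(X)h(X))$ lies in $S$ for every $a,b$. Conversely, for an arbitrary $(\alpha,\beta) \in S$, the first coordinate $\alpha \in I_1$ gives $\alpha = a(X)g(X)$; subtracting $a(X)\cdot (g,f)$ from $(\alpha,\beta)$ yields $(0,\beta - a(X)f(X)) \in S$, forcing $\beta - a(X)f(X) \in I_2 = \ideal{h(X)}$, hence $\beta = a(X)f(X) + b(X)h(X)$. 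The reverse implication of the lemma is immediate: any set of the stated form is visibly closed under diagonal multiplication by $X$.

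The main point needing care, rather than a genuine obstacle, is the claim that ideals of $\R$ are principal with generators dividing $X^n-1$, which is where the coprimality hypothesis is used. The polynomial $f$ is not canonical --- different choices of $f$ differ by an element of $\ideal{h}$ --- but any valid choice works, with the ambiguity absorbed into the $b(X)h(X)$ term.
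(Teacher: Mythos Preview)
Your proposal is correct and follows essentially the same argument as the paper: project $S$ onto the first coordinate to get the ideal $\ideal{g}$, take the second coordinates of pairs with zero first coordinate to get $\ideal{h}$, choose any $f$ with $(g,f)\in S$, and verify both inclusions via the subtraction $(\alpha,\beta)-a\cdot(g,f)$. The only cosmetic differences are that you phrase closure under the shift as an $\R$-module condition and are more explicit about why ideals of $\R$ are principal; the paper also defines the full projection onto the second coordinate but never actually uses it.
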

\begin{proof}
  Consider any subspace $S$ of $\mathbb{F}_p^n \times
  \mathbb{F}_p^n$. Clearly if there exists polynomials satisfying the
  above mentioned conditions, then $S$ is separately cyclic.

  To prove the converse, assume that $S$ is separately cyclic.  Define
  $A$ and $B$ to be the projections of $S$ onto the first and last $n$
  coordinates respectively, i.e. $A=\{ \mathbf{a} |
  (\mathbf{a},\mathbf{b}) \in S \}$ and $B = \{ \mathbf{b} |
  (\mathbf{a},\mathbf{b}) \in S \}$. Since $S$ is separately cyclic,
  $A$ and $B$ are cyclic subspaces of $\mathbb{F}_p^n$ and hence are
  ideals of the ring $\R$. Let $g(X)$ be the factor of $X^n - 1$ that
  generates $A$. Since $g(X)$ is an element of $A$ there exists a
  polynomial $f(X)$ in $\R$ such that $(\mathbf{g},\mathbf{f}) \in
  S$. Fix any such polynomial $f$. To construct $h(X)$, consider the
  set $B_0=\{b(x)|(0,\Vb)\in S\}$. Clearly $B_0$ is also cyclic and
  therefore and ideal of the ring $\R$. Let $h(X)$ denote the factor
  of $X^n -1$ that generates $B_0$. Our claim is that these are the
  required polynomials. 

  Since $S$ is separately cyclic we have $(X^i g(X), X^i f(X))$ are
  all elements of $S$. As $S$ is a subspace, by taking appropriate
  linear combinations, we have that for any two degree $n-1$
  polynomials $a(X)$ and $b(X)$, $(ag , a f + bh)$ is an element of
  $S$. On the other hand, consider any arbitrary $(u,v) \in
  S$. Clearly $\mathbf{u}$ is an element of $A$ and hence $u(X) = a(X)
  g(X)$.  Subtract from the $(u,v)$ the element $(ag, af)\in S$. We
  have $(0,v - af)$ is in $S$ and hence $v - af$ is in
  $B_0$. Therefore $v - af = bh$ for some polynomial $b$.

\end{proof}

The triple of polynomials $(g,f,h)$ play a crucial role in unravelling
the structure of a separately cyclic subspace $S$. We have the
following definition.

\begin{definition}[Generating triple]
  Let $S$ be a separately cyclic subspace of $\mathbb{F}_p^n \times
  \mathbb{F}_p^n$. A \emph{generating triple} for $S$ is a triple of
  polynomials $(g,f,h)$ in $\mathbb{F}_p[X]$ the polynomials pairs
  $(g,f)$ and $(0,h)$ are in $S$ and every element of $S$ as a
  polynomial pair in $\R \times \R$ is of the form $(ag, af + b h)$
  for some polynomials $a(X)$ and $b(X)$ in $\mathbb{F}_p[X]$.
\end{definition}

We want to express the isotropic condition
$\transpose{\mathbf{a}}\mathbf{d} = \transpose{\mathbf{b}}\mathbf{c}$
in terms of polynomials. The following definition on pairs of
polynomials over $\mathbb{F}_p$ that will play the role of the
isotropic condition in the setting of separately cyclic subspace.

\begin{definition}[Isotropic pairs of polynomial]
  Let $a(X)$, $b(X)$, $c(X)$ and $d(X)$ are polynomials in
  $\mathbb{F}_p[X]$.  We say that the pairs $(a,b)$ and $(c,d)$ are
  \emph{isotropic pairs of polynomial} modulo $X^n - 1$ for some $n$
  coprime to $p$ if and only if
  \[
  a(X) d(X^{-1}) = b(X) c(X^{-1}) \mod X^n - 1 .
  \]
\end{definition}

Notice that for any two vectors $\mathbf{u}$ and $\mathbf{v}$ in
$\mathbb{F}_p^n$, if $u(X)$ and $v(X)$ denote the corresponding
polynomials in $\R$, then the coefficient of $X^k$ in the product
$u(X)v(X^{-1})\mod X^n -1$ is the inner product
$\transpose{\mathbf{u}}N^k\mathbf{v}$, where $N$ is the right shift
operator. An immediate consequence of this observation is the
following.

\begin{proposition}
  Let $S$ be a separately cyclic subspace of $\mathbb{F}_p^n \times
  \mathbb{F}_p^n$.  An element $(\mathbf{u},\mathbf{v})$ is isotropic
  to all elements of $S$ with respect to the symplectic inner product
  if and only if the corresponding pair of polynomials $(u,v)$ is
  isotropic modulo $X^n -1$ with all polynomial pairs $(a,b)$ in $S$ .
\end{proposition}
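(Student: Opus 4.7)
The plan is to reduce the equivalence to the observation stated in the paragraph just before the proposition, namely that the coefficient of $X^k$ in $u(X)v(X^{-1}) \bmod X^n-1$ equals $\transpose{\mathbf{u}}N^k\mathbf{v}$. Applying this twice, the coefficient of $X^k$ in $u(X)b(X^{-1}) - v(X)a(X^{-1}) \bmod X^n-1$ equals $\transpose{\mathbf{u}}N^k\mathbf{b} - \transpose{\mathbf{v}}N^k\mathbf{a}$, which is precisely $\symp{(\mathbf{u},\mathbf{v})}{(N^k\mathbf{a},N^k\mathbf{b})}$. So the equivalence is just the statement that a polynomial in $\R$ is zero if and only if all of its $n$ coefficients vanish, combined with the fact that shifts of elements of $S$ remain in $S$.

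First I would establish the direction ($\Leftarrow$) which is the easier one: if the polynomial pairs $(u,v)$ and $(a,b)$ are isotropic modulo $X^n-1$, then in particular the constant coefficient of $u(X)b(X^{-1}) - v(X)a(X^{-1})$ vanishes, and by the observation this constant coefficient is exactly the symplectic inner product $\transpose{\mathbf{u}}\mathbf{b} - \transpose{\mathbf{v}}\mathbf{a} = \symp{(\mathbf{u},\mathbf{v})}{(\mathbf{a},\mathbf{b})}$. Since this holds for every $(a,b) \in S$ (isotropic pair condition is assumed for all pairs in $S$), $(\mathbf{u},\mathbf{v})$ is symplectically orthogonal to all of $S$.

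For the direction ($\Rightarrow$), I would exploit separate cyclicity essentially. Given any $(\mathbf{a},\mathbf{b}) \in S$ and any integer $k$, the pair $(N^k\mathbf{a},N^k\mathbf{b})$ also lies in $S$, so by hypothesis $\symp{(\mathbf{u},\mathbf{v})}{(N^k\mathbf{a},N^k\mathbf{b})} = 0$, i.e.\ $\transpose{\mathbf{u}}N^k\mathbf{b} - \transpose{\mathbf{v}}N^k\mathbf{a} = 0$. By the observation, this says every coefficient of $u(X)b(X^{-1}) - v(X)a(X^{-1})$ in $\R$ vanishes, so the polynomial identity $u(X)b(X^{-1}) = v(X)a(X^{-1}) \bmod X^n-1$ holds. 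Hence $(u,v)$ is isotropic with $(a,b)$ modulo $X^n-1$.

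The only nontrivial ingredient is the separate-cyclicity of $S$, which is exactly what lets a single symplectic condition on $(\mathbf{u},\mathbf{v})$ versus each element of $S$ upgrade to a full polynomial identity (all coefficients, not just the constant term). There is no real obstacle here; the main subtlety to be careful about is the sign in the symplectic form and the direction of the shift when reading off coefficients as inner products, but this is precisely handled by using the reverse polynomial $v(X^{-1}) = X^{-?}v(X^{-1})$ in the ring $\R$ where $X^{-1} = X^{n-1}$.
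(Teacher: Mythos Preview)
Your proof is correct and is exactly the argument the paper has in mind: the paper states the proposition as ``an immediate consequence'' of the observation that the coefficient of $X^k$ in $u(X)v(X^{-1})\bmod X^n-1$ is $\transpose{\mathbf{u}}N^k\mathbf{v}$, and you have simply spelled out that immediate consequence in full, using separate cyclicity precisely where it is needed for the $(\Rightarrow)$ direction. The only blemish is the garbled placeholder ``$v(X^{-1}) = X^{-?}v(X^{-1})$'' in your closing remark, which you should either fix or drop.
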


As a corollary we have the following proposition

\begin{corollary}\label{cor:isotropic-poly}
  A separately cyclic subset $S$ of $\mathbb{F}_p^{n}\times
  \mathbb{F}_p^n$ is totally isotropic if and only if for every pair
  of elements $(\mathbf{a},\mathbf{b})$ and $(\mathbf{c},\mathbf{d})$
  of $S$, the corresponding polynomials $(a,b)$ and $(c,d)$ are
  isotropic modulo $X^n -1$.  Furthermore, the centraliser
  $\centraliser{S}$ is the pair of all polynomials $(c,d)$ that are
  isotropic to all pairs of polynomials in $S$.
\end{corollary}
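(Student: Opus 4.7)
The plan is to observe that this corollary is essentially a repackaging of the proposition stated immediately before it, together with the alternating property of the symplectic form. First I would unpack what totally isotropic means: $S$ is totally isotropic iff every ordered pair of elements from $S$ has symplectic inner product zero. Since $\symp{\mathbf{u}}{\mathbf{u}} = \transpose{\mathbf{a}}\mathbf{b} - \transpose{\mathbf{b}}\mathbf{a} = 0$ automatically for $\mathbf{u} = (\mathbf{a},\mathbf{b})$, this is equivalent to saying that \emph{every} element $(\mathbf{u},\mathbf{v}) \in S$ is symplectic-isotropic to \emph{all} of $S$.

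Next, for each such fixed $(\mathbf{u},\mathbf{v}) \in S$, I would apply the preceding proposition, taking $S$ itself as the separately cyclic subspace in question: it tells us that $(\mathbf{u},\mathbf{v})$ being symplectic-isotropic to all elements of $S$ is equivalent to the polynomial pair $(u,v)$ being isotropic modulo $X^n - 1$ with every polynomial pair $(a,b)$ coming from $S$. Quantifying this equivalence over all $(\mathbf{u},\mathbf{v}) \in S$ yields precisely the biconditional claimed in the first sentence of the corollary.

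For the statement about the centraliser, I would just appeal to the definition of $\centraliser{S}$ as the set of $\mathbf{u} \in \mathbb{F}_p^{2n}$ satisfying $\symp{\mathbf{u}}{\mathbf{v}} = 0$ for all $\mathbf{v} \in S$, and again invoke the preceding proposition (which requires only that $S$ be separately cyclic, not that $\mathbf{u}$ itself lie in $S$). This identifies $\centraliser{S}$ with the set of polynomial pairs $(c,d)$ isotropic modulo $X^n - 1$ to every polynomial pair in $S$.

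There is no real obstacle here; this is a bookkeeping corollary, not a new computation. The only point that deserves explicit mention is why the ``for every pair of elements'' condition in the definition of total isotropy reduces to the per-element condition handled by the preceding proposition, and that reduction is exactly the alternating property noted above. Consequently the proof will be short, consisting of a one-line reduction to the preceding proposition followed by a sentence identifying the centraliser.
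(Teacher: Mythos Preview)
Your proposal is correct and matches the paper's approach: the paper states this as an immediate corollary of the preceding proposition without further proof, and your argument is exactly the natural unwinding of that proposition over all elements of $S$ together with the definition of the centraliser. The remark about the alternating property is harmless but not strictly needed, since ``every pair'' and ``each element against all of $S$'' are literally the same quantification.
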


We are now ready to characterise all cyclic stabiliser codes. It is
sufficient to characterise separately cyclic, totally isotropic
subsets of $\mathbb{F}_p^n \times \mathbb{F}_p^n$.

\begin{theorem}\label{thm:poly-characterisation}
  A subspace $S$ of $\mathbb{F}_p^n \times \mathbb{F}_p^n$ is totally
  isotropic and separately cyclic if and only if there exists
  polynomials $g$, $f$ and $h$ in $\mathbb{F}_p[X]$ such that
  \begin{enumerate}
  \item The polynomials $g$ and $h$ divide $X^n -1$ as polynomials
    over $\mathbb{F}_p$.
  \item $g(X^{-1}) h(X) = g(X) h(X^{-1}) = 0 \mod X^n -1$,
  \item The pair $(g,f)$ is isotropic to itself modulo $X^n - 1$, i.e.
    $g(X) f(X^{-1}) = f(X) g(X^{-1})$.
  \item $S$ is precisely the set of polynomial pairs of the form $(ag,
    af + bh)$ where $a$ and $b$ varies over polynomials over
    $\mathbb{F}_p$.
  \end{enumerate}
\end{theorem}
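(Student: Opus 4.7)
The plan is to split the theorem into its two directions and handle each using tools already developed. Lemma~\ref{lem:sepera} characterises separate cyclicity in polynomial language, and Corollary~\ref{cor:isotropic-poly} reduces total isotropy to a family of polynomial identities, so the theorem essentially amounts to identifying which identities appear and showing they suffice.

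For the forward direction I would start with a generating triple $(g,f,h)$ supplied by Lemma~\ref{lem:sepera}; this makes items~1 and 4 immediate, and also fixes the polynomials once and for all. I would then read off items~2 and 3 by applying Corollary~\ref{cor:isotropic-poly} to particular pairs in $S$. Isotropy of $(g,f)$ with itself gives item~3 directly. Isotropy of $(g,f)$ against $(0,h)$ collapses to $g(X)h(X^{-1}) \equiv 0 \pmod{X^n-1}$, and isotropy of $(0,h)$ against $(g,f)$ collapses to $h(X)g(X^{-1}) \equiv 0 \pmod{X^n-1}$. Together these yield item~2.

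For the converse I would define $S$ by item~4 and verify both required properties. Separate cyclicity is immediate because multiplication by $X$ in $\mathcal{R}$ realises the right shift $N$, and the description of $S$ is clearly closed under $a \mapsto Xa$ and $b \mapsto Xb$. For total isotropy I would expand
\[
(ag)(X)(cf+dh)(X^{-1}) \;-\; (af+bh)(X)(cg)(X^{-1}) \pmod{X^n-1}
\]
into four summands. The two $h$-containing cross terms vanish by item~2 (one carries a factor $g(X)h(X^{-1})$, the other a factor $h(X)g(X^{-1})$), while the remaining pair collapses to $a(X)c(X^{-1})\bigl[g(X)f(X^{-1}) - f(X)g(X^{-1})\bigr]$, which vanishes by item~3. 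Corollary~\ref{cor:isotropic-poly} then converts this identity of polynomials into genuine symplectic isotropy of $S$.

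The main obstacle I anticipate is bookkeeping in the converse direction: item~2 is asymmetric in the two orderings $g(X)h(X^{-1})$ and $h(X)g(X^{-1})$, and exactly one of these factors appears in each of the mixed cross terms, so one has to match them correctly when expanding. Once the matching is in hand the argument is purely structural, with all of the real algebraic content already packaged in Lemma~\ref{lem:sepera} and Corollary~\ref{cor:isotropic-poly}.
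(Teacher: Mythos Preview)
Your proposal is correct and follows essentially the same approach as the paper: both use the generating triple from Lemma~\ref{lem:sepera} and then test isotropy of $(g,f)$ against itself and against $(0,h)$ for the forward direction, and both argue that the polynomial identities in items~2 and~3 make the isotropy check for general elements $(a_ig,a_if+b_ih)$ routine in the converse. In fact your write-up of the converse expansion is more explicit than the paper's, which simply declares that verification ``straight forward.''
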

\begin{proof}
  Assume that $S$ is both separately cyclic and totally isotropic.
  Let $(g,f,h)$ be the triple generating $S$. Since $S$ is totally
  isotropic and since $(g,f)$ is an element of $S$, it should be
  isotropic modulo $X^n -1$ with every polynomial pair in $S$.  In
  particular it should be so with itself and the element $(0,h)$. The
  shows that $g$, $f$ and $h$ satisfies the conditions in the theorem.
  
  To prove the converse, assume that polynomials $g$, $f$ and $h$
  exists with the above mentioned properties. Then clearly $S$ is
  separately cyclic. It is straight forward to then check that any two
  elements $(a_i g , a_i f + b_i h)$, for $i=1,2$ form a isotropic
  pair of polynomials modulo $X^n -1$. Hence $S$ is isotropic.
\end{proof}

A similar proof together with corollary~\ref{cor:isotropic-poly}
gives the following characterisation of the centraliser of a
separately cyclic totally isotropic subspace.

\begin{theorem}
  Let $S$ be a totally isotropic, separately cyclic subspace of
  $\mathbb{F}_p^n \times \mathbb{F}_p^n$ generated by the triple
  $(g,f,h)$ then the centraliser $\centraliser{S}$ is the set of
  all polynomial pairs $(c,d)$ such that
  \begin{enumerate}
  \item $c(X) h(X^{-1}) = c(X^{-1}) h(X) = 0 \mod X^n -1$,
  \item $(g,f)$ and $(c,d)$ forms a isotropic pairs of polynomials,
    i.e $g(X)d(X^{-1}) = f(X) c(X^{-1}) \mod X^n -1$.
  \end{enumerate}
\end{theorem}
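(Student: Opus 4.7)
The plan is to proceed by analogy with the proof of Theorem~\ref{thm:poly-characterisation}, letting Corollary~\ref{cor:isotropic-poly} do the bulk of the translation. By Proposition~\ref{prop:centraliser-cyclic}, $\centraliser{S}$ is itself separately cyclic, and Corollary~\ref{cor:isotropic-poly} then tells me that a polynomial pair $(c,d)$ lies in $\centraliser{S}$ precisely when it is isotropic modulo $X^n - 1$ to every pair of polynomials in $S$. Because $S$ is generated by $(g,f,h)$, its elements are exactly the pairs $(ag,\, af + bh)$ with $a,b \in \Fp[X]$, so membership in $\centraliser{S}$ collapses to the single identity
\[
c(X)\bigl(a(X^{-1}) f(X^{-1}) + b(X^{-1}) h(X^{-1})\bigr) = d(X)\,a(X^{-1}) g(X^{-1}) \pmod{X^n - 1}
\]
holding for every choice of $a$ and $b$.

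To extract the two claimed conditions, I would specialise this universal statement. Taking $a = 0$, $b = 1$ forces $c(X) h(X^{-1}) \equiv 0$ in $\R$; applying the involution $X \mapsto X^{-1}$, which is a well-defined ring automorphism of $\R$ (since $X^{-1}$ is simply $X^{n-1}$ there), yields the companion identity $c(X^{-1}) h(X) \equiv 0$, and together these constitute condition~(1). Setting $b = 0$, $a = 1$ gives $c(X) f(X^{-1}) = d(X) g(X^{-1})$ in $\R$, which is exactly condition~(2).

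For the reverse inclusion the key observation is that the displayed equation factors as
\[
c(X)(af + bh)(X^{-1}) - d(X)(ag)(X^{-1}) = a(X^{-1}) \bigl(c(X) f(X^{-1}) - d(X) g(X^{-1})\bigr) + b(X^{-1})\, c(X) h(X^{-1}),
\]
so conditions (1) and (2) kill both summands in $\R$ simultaneously, making $(c,d)$ isotropic to every element of $S$.

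I do not anticipate any real obstacle; the argument is essentially a translation between a universal quantifier over $(a,b)$ and a polynomial identity in $\R$. The only subtlety worth stating carefully is that $X \mapsto X^{-1}$ is an involutive automorphism of $\R$, which is precisely what lets one pass from ``for every $b$'' to a polynomial identity, and also what makes the two halves of condition~(1) automatically equivalent, so that the apparent redundancy in the theorem statement costs nothing to verify.
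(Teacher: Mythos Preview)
Your proposal is correct and is precisely the ``similar proof together with Corollary~\ref{cor:isotropic-poly}'' that the paper alludes to without writing out: you translate membership in $\centraliser{S}$ via Corollary~\ref{cor:isotropic-poly} into polynomial isotropy with every element $(ag, af+bh)$, specialise to $(a,b)=(0,1)$ and $(1,0)$ to obtain the two conditions, and then use the evident factorisation for the converse. The appeal to Proposition~\ref{prop:centraliser-cyclic} is harmless but not actually needed, since Corollary~\ref{cor:isotropic-poly} already characterises $\centraliser{S}$ directly; and note that your condition $c(X)f(X^{-1})=d(X)g(X^{-1})$ is the image of the stated condition~(2) under $X\mapsto X^{-1}$, which is fine since isotropy of polynomial pairs is symmetric for exactly the reason you mention.
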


\section{Explicit constructions and decoding algorithm}

In this section we give an explicit constructions of cyclic stabiliser
codes over the binary alphabet $\mathbb{F}_2$. We define the notion of
a BCH distance for such codes and give an efficient algorithm to
decode such codes within the BCH limit.

Consider the unique quadratic extension
$\mathbb{F}_4=\mathbb{F}_2(\eta)$ of $\mathbb{F}_2$ obtain by
adjoining a root $\eta$ of the irreducible polynomial $X^2+X+1$ in
$\mathbb{F}_2[X]$. The conjugate root $\eta+1$ of $X^2+X+1$ in
$\mathbb{F}_4$ will we denoted by $\eta'$. For this section, we
identify the set $\mathbb{F}_2^n \times \mathbb{F}_2^n$ with the
vector space $\mathbb{F}_4^n$ over $\mathbb{F}_4$ as follows: every
pair of element $(\mathbf{a}, \mathbf{b})$ where $\mathbf{a}$ and
$\mathbf{b}$ are vectors in $\mathbb{F}_2^n$ will be identified with
vector $\mathbf{a} + \eta \mathbf{b}$ in $\mathbb{F}_4$. It is
convenient to extend this identification to polynomials as well. As
before, let $\R$ denote the cyclotomic ring $\mathbb{F}_2[X]/X^n
-1$. We identify the set $\R \times \R$ with the cyclotomic ring
$\R(\eta)=\mathbb{F}_4[X]/X^n -1$ over the field extension
$\mathbb{F}_4$ by identifying the pair $(a,b)$ of polynomials in $\R
\times \R$ with the polynomial $a(X) + \eta b(X) \in \R(\eta)$. The
codes that we give in this section will in fact be linear stabiliser
codes~\cite{calderbank98quantum} defined as follows.

\begin{definition}[Linear stabiliser codes]
  A stabiliser code is said to be \emph{linear} if the underlying
  totally isotropic subspace $S$ of $\mathbb{F}_p^n \times
  \mathbb{F}_p^n$ as a subset of $\mathbb{F}_4^n$ is a
  $\mathbb{F}_4$-linear subspace of $\mathbb{F}_4^n$.
\end{definition}

We first show the following result on separately cyclic subspaces of
that are $\mathbb{F}_4$ linear.

\begin{proposition}\label{prop:seperately-cyclic-ideal}
  Let $n$ be any positive odd integer. Let $S$ be any separately
  cyclic subspace of $\mathbb{F}_2^n \times \mathbb{F}_2^n$. Then $S$
  is $\mathbb{F}_4$ linear if and only if $S$ is an ideal of
  $\mathbb{F}_4[X]/X^n -1$.
\end{proposition}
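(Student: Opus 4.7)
The plan is to translate both sides of the biconditional into the common language of $\F_4$-subspaces of $\Ra$ closed under multiplication by the indeterminate $X$, and then observe that they coincide.

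First I would note how separate cyclicity reads in polynomial language. Under the identification $(a,b) \mapsto a + \eta b$ of $\R \times \R$ with $\Ra$, the simultaneous shift $(N\Va, N\Vb)$ corresponds to $X a + \eta(X b) = X(a + \eta b)$. Hence $S$ is separately cyclic if and only if $X \cdot S \subseteq S$ when $S$ is regarded as a subset of $\Ra$.

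Next I would recall that $\Ra$ is generated as an $\F_4$-algebra by $X$ alone: the monomials $1, X, \ldots, X^{n-1}$ span it over $\F_4$. Consequently, an $\F_4$-subspace $T \subseteq \Ra$ is an ideal of $\Ra$ if and only if $X \cdot T \subseteq T$. One direction is tautological; for the other, closure under multiplication by $X$ together with closure under $\F_4$-linear combinations yields closure under multiplication by every element $\sum_k \alpha_k X^k$ of $\Ra$.

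Putting these two observations together gives the stated equivalence. If $S$ is an ideal then, restricting the scalar action to the subring $\F_4 \subseteq \Ra$, $S$ is automatically $\F_4$-linear (and separate cyclicity is a standing hypothesis). Conversely, if $S$ is separately cyclic and $\F_4$-linear, the first observation converts separate cyclicity into $X \cdot S \subseteq S$, and the second then upgrades this to full ideal closure.

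I do not anticipate a genuinely hard step; the whole proposition is a translation lemma. The one point worth checking carefully is that the $\eta$ in the identification commutes with the action of $X$, which holds because $X$ has coefficients in $\F_2 \subseteq \F_4$, so multiplication by $X$ is an $\F_4$-linear endomorphism of $\Ra$. The hypothesis that $n$ is odd plays no direct role in the argument itself; it is inherited from the standing assumption that $X^n-1$ is separable over $\F_2$.
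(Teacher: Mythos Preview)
Your argument is correct and follows essentially the same line as the paper, which simply observes that a separately cyclic $\mathbb{F}_4$-linear subspace is precisely a classical cyclic code over $\mathbb{F}_4$ and then invokes the standard correspondence between such codes and ideals of $\mathbb{F}_4[X]/(X^n-1)$. Your version is just more explicit about why closure under multiplication by $X$ together with $\mathbb{F}_4$-linearity upgrades to full ideal closure.
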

\begin{proof}
  Clearly $S$ is separately cyclic if and only if $S$ is cyclic as in
  a classical cyclic code over $\mathbb{F}_4$. The result then follows
  from the theory of classical cyclic codes.
\end{proof}

As a corollary we have the following result

\begin{corollary}
  Let $\mathcal{C}$ be any linear stabiliser code and let $S$ be the
  underlying totally isotropic set. Then $\mathcal{C}$ is cyclic if
  and only if $S$ and its centraliser $\centraliser{S}$ are ideals of
  $\mathbb{F}_4[X]/X^n -1$.
\end{corollary}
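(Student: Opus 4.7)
The plan is to chain together Proposition \ref{prop:stab-cyclic} (cyclicity of a stabiliser code $\mathcal{C}$ is equivalent to separate cyclicity of both $S$ and $\centraliser{S}$) with Proposition \ref{prop:seperately-cyclic-ideal} (for a separately cyclic subspace, $\mathbb{F}_4$-linearity is equivalent to being an ideal of $\mathbb{F}_4[X]/(X^n-1)$). The only missing link is that $\mathbb{F}_4$-linearity passes from $S$ to its centraliser $\centraliser{S}$.

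For the forward direction I would assume $\mathcal{C}$ is a cyclic linear stabiliser code. Proposition \ref{prop:stab-cyclic} then gives that $S$ and $\centraliser{S}$ are separately cyclic, and by definition of linearity $S$ is $\mathbb{F}_4$-linear. Granting the linearity-of-centraliser claim below, a single application of Proposition \ref{prop:seperately-cyclic-ideal} to each of $S$ and $\centraliser{S}$ shows both are ideals. For the converse, if both $S$ and $\centraliser{S}$ are ideals of $\mathbb{F}_4[X]/(X^n-1)$, they are automatically $\mathbb{F}_4$-linear, and by Proposition \ref{prop:seperately-cyclic-ideal} they are separately cyclic; Proposition \ref{prop:stab-cyclic} then makes $\mathcal{C}$ cyclic.

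The only real work, and the step I expect to require the most care, is showing that if $S \subseteq \mathbb{F}_4^n$ is $\mathbb{F}_4$-linear then so is $\centraliser{S}$. Since $\centraliser{S}$ is already known to be $\mathbb{F}_2$-linear and $\mathbb{F}_4 = \mathbb{F}_2(\eta)$, it suffices to show that $\eta \mathbf{u} \in \centraliser{S}$ whenever $\mathbf{u} \in \centraliser{S}$. Under the identification $(\mathbf{a},\mathbf{b}) \leftrightarrow \mathbf{a} + \eta \mathbf{b}$, the relation $\eta^2 = \eta + 1$ translates multiplication by $\eta$ into the $\mathbb{F}_2$-linear map $(\mathbf{a},\mathbf{b}) \mapsto (\mathbf{b},\, \mathbf{a} + \mathbf{b})$. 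My plan is to verify the identity
\[
\symp{\eta \mathbf{u}}{\mathbf{v}} \;=\; \symp{\mathbf{u}}{\mathbf{v}} + \symp{\mathbf{u}}{\eta \mathbf{v}}
\]
by a direct calculation from the definition $\symp{(\mathbf{a},\mathbf{b})}{(\mathbf{c},\mathbf{d})} = \transpose{\mathbf{a}}\mathbf{d} - \transpose{\mathbf{b}}\mathbf{c}$ (recall we are in characteristic $2$, so signs collapse). Once this identity is in hand, the conclusion is immediate: for $\mathbf{u} \in \centraliser{S}$ and $\mathbf{v} \in S$, both $\mathbf{v}$ and $\eta\mathbf{v}$ lie in $S$ by $\mathbb{F}_4$-linearity of $S$, so both terms on the right vanish and $\eta \mathbf{u}$ is symplectically orthogonal to every element of $S$. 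Everything else in the corollary is a matter of invoking the previously proved propositions in sequence.
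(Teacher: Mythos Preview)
Your argument is correct and follows exactly the route the paper intends: the corollary is stated immediately after Proposition~\ref{prop:seperately-cyclic-ideal} with no proof, as the combination of Propositions~\ref{prop:stab-cyclic} and~\ref{prop:seperately-cyclic-ideal}. The one point you rightly flag and verify --- that $\mathbb{F}_4$-linearity of $S$ forces $\mathbb{F}_4$-linearity of $\centraliser{S}$ --- is glossed over in the paper at this stage; your symplectic identity $\symp{\eta\mathbf{u}}{\mathbf{v}} = \symp{\mathbf{u}}{\mathbf{v}} + \symp{\mathbf{u}}{\eta\mathbf{v}}$ checks out and settles it cleanly. (The paper does eventually supply this fact, in the equivalent form ``$S$ an ideal implies $\centraliser{S}$ an ideal'', as Lemma~\ref{lem_sbar_ideal}, but only in a later section.)

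One cosmetic remark: in your converse direction you invoke Proposition~\ref{prop:seperately-cyclic-ideal} to deduce separate cyclicity from being an ideal, but strictly speaking that proposition takes separate cyclicity as a hypothesis. The implication you want (ideal $\Rightarrow$ closed under multiplication by $X$ $\Rightarrow$ separately cyclic) is immediate and needs no citation.
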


Thus constructing linear stabiliser codes that are also cyclic
involves computing factors of $X^n -1$ over $\mathbb{F}_4$ such that
the underlying ideal as a set of $\mathbb{F}_2^n \times
\mathbb{F}_2^n$ is totally isotropic.  To this end we fix some more
notation.

Consider the Forbenius automorphism $\sigma$ on $\mathbb{F}_4$ that
maps the root $\eta$ its conjugate $\eta' = \eta +1$. We extend this
to the ring $\mathbb{F}_4[X]/X^n -1$ by defined by $\sigma\left(a(X) +
\eta b(X)\right) = a(X) + \eta'b(X)$. We first state the following
result on the irreducible factors of $X^{4^m +1} - 1$ over the finite
fields $\mathbb{F}_2$ and $\mathbb{F}_4$ respectively.

\begin{lemma}\label{lem:degree} 
  Let $f(X)$ be any irreducible factor of $X^{4^m + 1} -X$ over
  $\mathbb{F}_2$.
  \begin{enumerate}
  \item A root $\beta$ of $X^n -1$ is a root of $f(X)$ if and only if
    $\beta^{-1}$ is also a root. \label{property-beta-inv}
  \item Furthermore if $f(X) \neq X -1$, the degree of $f(X)$ is an
    even number $2t$ and it splits into two irreducible factors
    $f_1(X,\eta)$ and $f_2(X,\eta)$ of degree $t$ each such that $f_2
    = \sigma(f_1)$.\label{property-even-deg}
  \end{enumerate}
\end{lemma}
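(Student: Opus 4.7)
The plan is to work in a splitting field of $X^n-1$ over $\mathbb{F}_2$, with $n=4^m+1$, and exploit the standard correspondence between irreducible factors of $X^n-1$ over $\mathbb{F}_2$ and orbits of the $n$-th roots of unity under the Frobenius $x\mapsto x^2$. The single observation that drives both parts is that the defining relation $\beta^n=1$ can be rewritten as $\beta^{-1}=\beta^{4^m}=\beta^{2^{2m}}$: for an $n$-th root of unity, inversion is the same as applying Frobenius exactly $2m$ times.

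Part~\ref{property-beta-inv} then falls out immediately, since $\beta^{-1}$ lies in the Frobenius orbit of $\beta$ and each irreducible factor over $\mathbb{F}_2$ is exactly one such orbit. For part~\ref{property-even-deg}, I would introduce $n'$, the multiplicative order of $\beta$ in $\overline{\mathbb{F}_2}^\times$, and note that $n'\mid n$, that $n'$ is odd, and that $n'\geq 3$ (using $\beta\neq 1$ since $f\neq X-1$). The degree $d$ of $f$ equals the order of $2$ modulo $n'$. Reducing the key identity modulo $n'$ gives $2^{2m}\equiv -1\pmod{n'}$, so squaring yields $d\mid 4m$, while $d\nmid 2m$ because $-1\not\equiv 1\pmod{n'}$. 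If $d$ were odd, then $\gcd(d,4)=1$ together with $d\mid 4m$ would force $d\mid m$ and hence $d\mid 2m$, a contradiction; so $d=2t$ is even.

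For the splitting over $\mathbb{F}_4$, I would compute the degree of the minimal polynomial of $\beta$ over $\mathbb{F}_4$, which is the multiplicative order of $4$ modulo $n'$; since $4^s\equiv 1\pmod{n'}$ iff $2t\mid 2s$ iff $t\mid s$, this order is exactly $t$. Hence $f$ factors over $\mathbb{F}_4$ into two polynomials $f_1,f_2$ of degree $t$. The generator $\sigma$ of $\mathrm{Gal}(\mathbb{F}_4/\mathbb{F}_2)$ must permute these factors while fixing $f\in\mathbb{F}_2[X]$; the case $\sigma(f_1)=f_1$ would put $f_1\in\mathbb{F}_2[X]$, contradicting irreducibility of $f$, so $\sigma(f_1)=f_2$. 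I do not expect any real obstacle: the only places needing care are the parity deduction that $d$ is even and the argument that $f_1\neq f_2$, both of which are short and self-contained.
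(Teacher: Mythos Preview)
Your proof is correct and follows the same overall strategy as the paper: both rest on the identity $\beta^{-1}=\beta^{4^m}$, which places $\beta^{-1}$ in the Frobenius orbit of $\beta$, and then argue that an irreducible factor of degree $2t$ over $\mathbb{F}_2$ splits into two degree-$t$ factors over $\mathbb{F}_4$ interchanged by $\sigma$.

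The one noteworthy difference is in how you establish that the degree is even. The paper observes directly that the involution $\beta\mapsto\beta^{-1}$ has no fixed point among the roots of $f$ (since $\beta^2=1$ forces $\beta=1$ in characteristic~$2$, and $f\neq X-1$), so the roots pair off and the degree is even. You instead introduce $n'=\mathrm{ord}(\beta)$ and argue arithmetically: from $2^{2m}\equiv -1\pmod{n'}$ you get $d\mid 4m$ but $d\nmid 2m$, and then rule out odd $d$. The paper's pairing argument is shorter and more transparent; your arithmetic route is slightly longer but has the advantage that it dovetails naturally with the next step, since the order of $4$ modulo $n'$ is exactly what you need to compute the degree of the minimal polynomial over $\mathbb{F}_4$. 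Your treatment of why $\sigma(f_1)\neq f_1$ is also more explicit than the paper's, which simply asserts $f=f_1\cdot\sigma(f_1)$ without spelling out that equality would force $f_1\in\mathbb{F}_2[X]$.
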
 
\begin{proof} 
  Let $f(X)$ be any irreducible factor of $X^{4^m +1} -1$ over
  $\mathbb{F}_2$. Consider any root $\beta$ of $f(X)$.  

  To prove Property~\ref{property-beta-inv}, note that $\sigma^{2m}$
  is a field automorphism and $\sigma^{2m}(\beta) = \beta^{4^m} =
  \beta^{-1}$. Hence $f(\beta^{-1}) = \sigma^{2m} (f(\beta)) = 0$.

  Now consider Property~\ref{property-even-deg}.  Consider any root
  $\beta$ of $f(X)$.  Since $f(X) \neq X -1$, we have $\beta \neq 1$
  and hence $\beta \neq \beta^{-1}$. As a result, that roots of $f(X)$
  comes in pairs; for every root $\beta$ its inverse $\beta^{-1}$ is
  also a root. Hence, $f(X)$ has to be of even degree.

  Let the degree of $f$ be the even number $2t$ then the splitting
  field of $f(X)$ over $\mathbb{F}_2$ is is $\mathbb{F}_{4^t}$ and
  therefore contains $\mathbb{F}_4$. Furthermore any irreducible
  factor $f_1(X)$ of $f(X)$ over $\mathbb{F}_4$ has degree exactly the
  degree of the extension $\mathbb{F}_{4^t}/\mathbb{F}_4$ which is
  $t$. Since $\sigma$ is a field automorphism of $\mathbb{F}_4$, we
  have $f(X) = f_1(X) \sigma(f_1(X))$ over $\mathbb{F}_4$.
\end{proof}

The following theorem gives us a generic way of constructing linear
cyclic codes over $\mathbb{F}_2$ of length $4^m +1$.

\begin{theorem}\label{thm:main-construction}
  Let $n = 4^m +1$ and let $g(X)$ be any factor of $X^n - 1$ over
  $\mathbb{F}_2$.  Let $h(X,\eta)$ be any product of irreducible
  factors of $X^n -1/g$ over $\mathbb{F}_4$, such that for any
  irreducible factor $r(X,\eta)$ of $X^n -1$ over $\mathbb{F}_4$,
  $r(X,\eta)$ divides $h(X,\eta)$ if and only if $\sigma(r) =
  r(X,\eta')$ \emph{does not}.  Then there is a linear stabiliser code
  whose underlying totally isotropic set $S$ is the ideal generated by
  $g(X)h(X,\eta)$ in $\Ra$ and hence is cyclic. Furthermore, the
  centerliser $\centraliser{S}$ of $S$ is the ideal generated by
  $h(X,\eta)$.
\end{theorem}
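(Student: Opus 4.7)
The plan is to prove three things: (a) $S = \langle g h\rangle$ is totally isotropic, (b) $\centraliser{S} = \langle h\rangle$, and (c) the stabiliser code is cyclic. Cyclicity (c) follows from Proposition~\ref{prop:stab-cyclic} combined with Proposition~\ref{prop:seperately-cyclic-ideal}, since both $\langle gh\rangle$ and $\langle h\rangle$ are ideals of $\Ra$ and hence $\F_4$-linear, separately cyclic subspaces of $\F_2^n \times \F_2^n$. The real content is (a) and (b).

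The workhorse will be the polynomial identity $g(X) \cdot h(X,\eta) \cdot h(X,\eta') = X^n - 1$ in $\F_4[X]$. To establish it, I would first argue that the iff-condition on $h$ forces $X - 1 \mid g$: the condition ``$r \mid h \Leftrightarrow \sigma(r) \nmid h$'' is self-contradictory for the unique self-conjugate $\F_4$-irreducible factor $r = X-1$ of $X^n - 1$. The remaining factors of $X^n - 1$ over $\F_2$ split over $\F_4$ into distinct conjugate pairs by Lemma~\ref{lem:degree}~(\ref{property-even-deg}), and the condition on $h$ puts exactly one factor of each pair into $h$ and the other into $\sigma(h)$, giving $h \cdot \sigma(h) = (X^n - 1)/g$. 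Hence $g h \sigma(h) \equiv 0$ in $\Ra$. A second consequence, via Lemma~\ref{lem:degree}~(\ref{property-beta-inv}) and the identity $-1 \equiv 4^m \pmod n$, is that every $\F_4$-irreducible factor of $X^n - 1$ is self-reciprocal, so the reciprocation $p^*(X) := p(X^{-1})$ preserves the principal ideals $\langle g\rangle$, $\langle h\rangle$, and $\langle \sigma(h)\rangle$.

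Next, I would translate the polynomial isotropy condition of Corollary~\ref{cor:isotropic-poly} into $\Ra$: for $\alpha = a + \eta b$ and $\beta = c + \eta d$, a direct computation using $\eta^2 = \eta + 1$ and $\sigma(\eta) = \eta + 1$ shows the $\eta$-coefficient of $\alpha \cdot \sigma(\beta^*)$ equals $a d^* + b c^*$. Hence $(a,b)$ and $(c,d)$ form an isotropic pair of polynomials iff $\alpha\,\sigma(\beta^*)$ lies in the $\sigma$-fixed subring $\R$. Now for any $\alpha \in \langle h\rangle$ and $\beta \in \langle g h\rangle$, reciprocation preserves the ideal, so $\beta^* \in \langle g h\rangle$, and then $\sigma(\beta^*) \in \langle g\,\sigma(h)\rangle$. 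Therefore $\alpha\,\sigma(\beta^*) \in \langle h\rangle \cdot \langle g\,\sigma(h)\rangle = \langle g h \sigma(h)\rangle = \{0\}$, so the isotropy condition holds trivially. This shows $\langle h\rangle \subseteq \centraliser{S}$. Combined with $S = \langle g h\rangle \subseteq \langle h\rangle$, we conclude $S \subseteq \centraliser{S}$, proving (a).

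For (b) a dimension count finishes things off: as $\F_4$-ideals, $\dim_{\F_4}\langle gh\rangle = n - \deg g - \deg h$ and $\dim_{\F_4}\langle h\rangle = n - \deg h$, and comparing degrees in $g h \sigma(h) = X^n - 1$ gives $\deg g + 2\deg h = n$. Hence $\dim_{\F_2} S + \dim_{\F_2}\langle h\rangle = 2n$, and Theorem~\ref{thm:isotropic-main-properties} then forces $\dim_{\F_2}\centraliser{S} = 2n - \dim_{\F_2} S = \dim_{\F_2}\langle h\rangle$, turning the inclusion into equality. The main obstacle is establishing the factorisation $X^n - 1 = g\cdot h\cdot \sigma(h)$ and the compatibility of the relevant ideals with $(\cdot)^*$; once those are in place, isotropy becomes the trivial observation that $g h \sigma(h)$ vanishes in $\Ra$.
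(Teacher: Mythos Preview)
Your proof is correct and follows a genuinely different line from the paper's argument. The paper proceeds by constructing, via Chinese remaindering, an auxiliary polynomial $a(X)$ with $a\equiv \eta$ or $\eta'$ modulo each $\mathbb{F}_4$-irreducible factor of $(X^n-1)/g$; it then shows $a(X)\in\mathbb{F}_2[X]$, identifies $S$ with the separately cyclic subspace generated by the triple $(g,ag,0)$, and reduces isotropy via Theorem~\ref{thm:poly-characterisation} to the identity $a(X^{-1})\equiv a(X)\bmod (X^n-1)/g$, which it deduces from $a^2\equiv a+1$ and the Frobenius relation $a(X^{-1})\equiv a(X)^{4^m}$. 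The paper's displayed proof in fact stops at isotropy and does not establish the centraliser claim.

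By contrast, you never leave the $\mathbb{F}_4$-ideal picture: the factorisation $g\,h\,\sigma(h)=X^n-1$ together with the self-reciprocity of each $\mathbb{F}_4$-irreducible factor (from $4^m\equiv -1\bmod n$) lets you show that $\alpha\,\sigma(\beta^*)$ lies in the zero ideal for $\alpha\in\langle h\rangle$, $\beta\in\langle gh\rangle$, so the isotropy condition holds trivially. This simultaneously gives $\langle h\rangle\subseteq\centraliser{S}$, and your dimension count then pins down the centraliser exactly. Your route is more streamlined---no CRT construction, no appeal to the triple framework---and it delivers the centraliser statement as a byproduct. The paper's approach, on the other hand, makes the link to the Section~\ref{sec:quantum-cyclic-codes} machinery explicit by exhibiting the generating triple $(g,ag,0)$, and the element $a(X)$ it constructs resurfaces in the later analysis of these codes.
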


\begin{proof}
  Let the polynomials be as in the theorem. Clearly the ideal $S$
  generated by $g(X)h(X,\eta)$ is separately cyclic. All that remains
  is to show that $S$ is isotropic. Consider the ring
  $\mathbb{F}_4[X]/X^n -1$. Define via Chinese remaindering an element
  $a(X)$ such that $a(X) = 0\mod g$ and for all irreducible factor
  $r(X,\eta)$
  \[
  a(X) = \left\{
  \begin{array}{ll}
    \eta \mod r(X,\eta) & \textrm{ if } r(X,\eta) \nmid h(X,\eta),\\
    \eta' \mod r(X,\eta) & \textrm{ if } r(X,\eta) \mid h(X,\eta),
  \end{array}\right.
  \]

  We first show that $a(X)$ is a polynomial over $\mathbb{F}_2[X]$
  instead of $\mathbb{F}_4[X]$. To prove this consider the action of
  the Frobenius $\sigma$ on $\Ra$ defined by
  $\sigma\left(u(X,\eta)\right) = u(X,\eta')$. It is sufficient to
  show that $\sigma(a) = a$.  Consider any irreducible factor $r$ of
  $X^n -1$ over $\mathbb{F}_4$.  Since $h(X,\eta)$ contains in it
  exactly one of the factors $r$ or $\sigma(r)$, let us assume,
  without loss of generality, that $r \nmid h(X,\eta)$ and $a$ modulo
  $r$ and $\sigma(r)$ are $\eta$ and $\eta'$ respectively. Therefore
  $\sigma(a)$ modulo $\sigma(r)$ and $\sigma^2(r) = r$ are $\eta'$ and
  $\eta$ modulo respectively. Therefore $a = \sigma(a) \mod
  r\sigma(r)$ for every irreducible factor of $X^n -1/g$ Furthermore
  since $a = 0 \mod g$, we have $\sigma(a) = a$ over $\R(\eta)$.
  
  It is easy to verify that the ideal $S$ as a separately cyclic set
  is generated by the triple $(g,ag,0)$. By
  Theorem~\ref{thm:poly-characterisation}, $S$ is isotropic if and
  only if the equation
  \begin{equation}\label{eqn:gga}
  g(X)g(X^{-1}) a(X) = g(X)g(X^{-1})a(X^-1).
  \end{equation}
  holds modulo $X^n -1$. By Chinese remaindering it is sufficient to
  verify Equation~\ref{eqn:gga} modulo $g$ and $X^n -1/g$ separately.
  
  Clearly Equation~\ref{eqn:gga} holds modulo $g$.  Since $g(X)$ is a
  product of irreducible factors of $X^n -1$ over $\mathbb{F}_2$, for
  any root $\beta$ of $X^n -1$, we have $g(\beta^{-1}) = 0$ if and
  only if $g(\beta) = 0$ (Lemma~\ref{lem:degree}).  By both $g(X)$ and
  $g(X^{-1})$ are invertible modulo $X^n -1/g$. Therefore we need to
  show that $a(X^{-1}) = a(X)$ modulo $X^n -1 /g$. It follows from the
  construction of $a(X)$ that it satisfies the equation $a^2 = a +1
  \mod X^n -1/g$. Therefore $a(X^{-1})\=a(X)^{4^m} = a(X) + 2m$ modulo
  $X^n -1/g$ and hence $a(X^{-1}) = a(X) \mod X^n -1/g$ as $2m=0$ in
  characteristic $2$.

\end{proof}

We call the codes that are characterised by the above theorem as
\emph{$4^m+1$-codes} and the pair of polynomials $(g(X),h(X,\eta)$ the
\emph{generating pair} of the code. We have the following theorem on
the dimension of the code.

\begin{theorem}
  Let $\mathcal{C}$ be $4^m+1$-code generated by the pair $(g,h)$ then
  the stabiliser dimension is given by $\mathrm{deg}(g)$.
\end{theorem}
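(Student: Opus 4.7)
The plan is to compute the $\mathbb{F}_2$-dimension of $S$ and then invoke Theorem~\ref{thm:isotropic-main-properties}, which says that if $S$ has dimension $n-k$ as an $\mathbb{F}_2$-subspace of $\mathbb{F}_2^{2n}$, then the stabiliser dimension of $\mathcal{C}$ equals $k$. So it suffices to show $\dim_{\mathbb{F}_2} S = n - \deg(g)$.

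First I would translate the $\mathbb{F}_4$-description of $S$ back to $\mathbb{F}_2$. Under the identification of $\R \times \R$ with $\Ra$, an $\mathbb{F}_4$-subspace of dimension $d$ corresponds to an $\mathbb{F}_2$-subspace of dimension $2d$. Since $S$ is the ideal generated by $g(X)h(X,\eta)$ in $\Ra = \mathbb{F}_4[X]/(X^n-1)$, standard ring theory gives $\dim_{\mathbb{F}_4} S = n - \deg(gh)$, hence $\dim_{\mathbb{F}_2} S = 2\bigl(n - \deg(gh)\bigr)$.

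Next I would compute $\deg(gh)$ using Lemma~\ref{lem:degree}. Write $\deg(g)+\deg(h)$; the content of the lemma is that every irreducible factor of $X^n-1$ over $\mathbb{F}_2$ other than $X-1$ has even degree $2t$ and splits over $\mathbb{F}_4$ into two conjugate factors of degree $t$. I would first note that $g$ must contain $X-1$: otherwise $X-1$ appears in $(X^n-1)/g$ as a self-conjugate irreducible factor over $\mathbb{F}_4$, and the defining property of $h$ (that $r\mid h$ iff $\sigma(r)\nmid h$) is inconsistent for $r=X-1$. Granting this, every irreducible $\mathbb{F}_2$-factor of $(X^n-1)/g$ has even degree $2t_j$ and contributes exactly one $\mathbb{F}_4$-factor of degree $t_j$ to $h$. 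Summing,
\[
\deg(h) \;=\; \tfrac{1}{2}\deg\!\bigl((X^n-1)/g\bigr) \;=\; \tfrac{n-\deg(g)}{2}.
\]
Combining, $\deg(gh) = \deg(g) + \tfrac{n-\deg(g)}{2} = \tfrac{n+\deg(g)}{2}$, and therefore
\[
\dim_{\mathbb{F}_2} S \;=\; 2\!\left(n - \tfrac{n+\deg(g)}{2}\right) \;=\; n - \deg(g),
\]
which by Theorem~\ref{thm:isotropic-main-properties} yields stabiliser dimension $k = \deg(g)$.

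The one subtlety is the parity argument forcing $X-1\mid g$: without it, $\deg(h)$ is not an integer and the construction in Theorem~\ref{thm:main-construction} is ill-defined. Everything else is essentially a bookkeeping of degrees using the ``conjugate pair'' structure from Lemma~\ref{lem:degree}, together with the factor-of-two conversion between $\mathbb{F}_4$-dimension and $\mathbb{F}_2$-dimension.
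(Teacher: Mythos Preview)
Your proof is correct and follows essentially the same route as the paper: compute $\dim_{\mathbb{F}_2} S$ by counting and then invoke Theorem~\ref{thm:isotropic-main-properties}. The paper is a bit terser---it simply notes $g\,h\,\sigma(h) = X^n - 1$, hence $d_1 + 2d_2 = n$, whereas you derive the same degree relation by explicitly using Lemma~\ref{lem:degree} and the ``exactly one of each conjugate pair'' condition on $h$; your remark that $X-1$ must divide $g$ is correct but is really a well-definedness condition on the construction rather than a step needed here.
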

\begin{proof}
  The set $S$ is an ideal of $\mathbb{F}_4[X]/X^n -1$. Let $d_1$ and
  $d_2$ denote the degree of $g$ and $h$ respectively then clearly $\#
  S$ is $4^{n - (d_1 + d_2)}$. Also note that $g h \sigma(h) = X^n
  -1$. Therefore $d_1 + 2 d_2 = n$. 

  If $k$ is the $\mathbb{F}_2$ dimension of $S$ then $\# S =
  2^k$. Comparing we have $k = 2n - 2d_2 - 2 d_2$. Thus we have $k = n
  - d_1$. Therefore by \ref{thm:isotropic-main-properties} we have the
  required result.
\end{proof}

We now recall some concepts from the theory of classical cyclic codes.

\begin{definition}[BCH distance]
  Let $g(X)$ be a factor of the polynomial $X^n - 1$ over the finite
  field $\mathbb{F}_q$ for some prime power $q$ and let $n$ coprime to
  $q$. The \emph{BCH distance} is the largest integer $d$ such that
  the consecutive distinct powers $\beta^{l}$,$\beta^{l+1},\ldots,
  \beta^{l + d -2}$ are roots of $g$, for some primitive $n$-th root
  $\beta$.
\end{definition}

For a $4^m +1$-code generated by the pair $(g,h)$, by the \emph{BCH
  distance} we mean the BCH distance of $h(X,\eta)$ as a factor of
$X^n -1$ over $\mathbb{F}_4$. We have the following theorem about the
distance.

\begin{theorem}
  The distance of a $4^m+1$ code is at least its BCH distance.
\end{theorem}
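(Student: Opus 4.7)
The plan is to reduce the theorem to the classical BCH bound for cyclic codes over $\mathbb{F}_4$. By Theorem~\ref{thm:main-construction}, the centraliser $\centraliser{S}$ is precisely the ideal $\langle h(X,\eta)\rangle$ in $\Ra = \mathbb{F}_4[X]/(X^n-1)$, and by Theorem~\ref{thm:isotropic-main-properties} the distance of $\mathcal{C}$ equals $\min\{\weight{u} : u \in \centraliser{S}\setminus S\}$. It therefore suffices to prove the stronger statement that every nonzero element of $\langle h(X,\eta)\rangle$ has weight at least $d$; the restriction to $\centraliser{S}\setminus S$ then follows for free.

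The next step is to translate the quantum weight into the ordinary Hamming weight of a vector in $\mathbb{F}_4^n$. Under the identification of $\mathbb{F}_2^n\times\mathbb{F}_2^n$ with $\mathbb{F}_4^n$, a pair $(\mathbf{a},\mathbf{b})$ corresponds to $\mathbf{a}+\eta\mathbf{b}$, and in position $i$ we have $a_i+\eta b_i\neq 0$ exactly when $(a_i,b_i)\neq (0,0)$, since $\{1,\eta\}$ is an $\mathbb{F}_2$-basis of $\mathbb{F}_4$. Hence the joint weight $\weight{\mathbf{a},\mathbf{b}}$ coincides with the $\mathbb{F}_4$-Hamming weight of the corresponding element of $\Ra$. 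Thus the quantum distance is bounded below by the minimum Hamming weight of a nonzero codeword in the classical cyclic code over $\mathbb{F}_4$ generated by $h(X,\eta)$.

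Finally, I would appeal to the classical BCH bound for this $\mathbb{F}_4$-cyclic code. By the definition of BCH distance, there exist a primitive $n$-th root of unity $\beta$ (in an appropriate extension of $\mathbb{F}_4$) and an integer $l$ such that $\beta^l,\beta^{l+1},\ldots,\beta^{l+d-2}$ are roots of $h(X,\eta)$. Every element $c(X)$ of $\langle h(X,\eta)\rangle$ is a multiple of $h$ and therefore vanishes at these $d-1$ consecutive powers of $\beta$. The standard Vandermonde argument then shows that no nonzero polynomial of degree less than $n$ with fewer than $d$ nonzero coefficients can vanish at $d-1$ consecutive powers of $\beta$, and this yields the required lower bound.

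The main thing to get right is the weight translation in the middle step; once the quantum weight is identified with the $\mathbb{F}_4$-Hamming weight, the result is immediate from the classical BCH bound and no new combinatorial input is needed.
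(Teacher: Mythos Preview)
Your proposal is correct and follows essentially the same approach as the paper's own proof: identify $\centraliser{S}$ with the classical $\mathbb{F}_4$-cyclic code $\langle h(X,\eta)\rangle$, note that the joint weight coincides with the $\mathbb{F}_4$-Hamming weight, and invoke the classical BCH bound together with Theorem~\ref{thm:isotropic-main-properties}. The paper's version is terser (it simply asserts the weight identification and the BCH bound without the explicit Vandermonde justification), but the logical skeleton is identical.
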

\begin{proof}
  The centraliser $\centraliser{S}$ of the underlying totally
  isotropic subspace $S$ is an ideal generated by $h(X,\eta)$ and
  hence can be thought of as classical cyclic code over
  $\mathbb{F}_4$. Hence the $\mathbb{F}_4$-weight of any non-zero
  element in $\centraliser{S}$ is at least the BCH distance. As the
  distance of the code is the the minimum weight of $\centraliser{S}
  \setminus S$ (Theorem~\ref{thm:isotropic-main-properties}), we have
  the desired result.
\end{proof}

To demonstrate the construction for specific cases take $m=1$. The
polynomial $X^5-1 = (X -1) (X^4+X^3+X^2+X+1)$ over
$\mathbb{F}_2$. Further, over $\mathbb{F}_4$, the degree $4$
irreducible factor factorises into $(X^2+\eta X + 1)$ and $(X^2 +
\eta' X + 1)$. If we pick $g = X - 1$ and $h(X,\eta)$ to be any one of
the factors, we get a $[[5,1,3]]$ code which turns out to be the
Laflamme code.

The case $m=2$ is more interesting.  The polynomial $X^{17} -1$
factorises into three factors.
\[
x^{17}-1=(x+1)(x^8 + x^7 + x^6 + x^4 + x^2 + x +1)(x^8+x^5+x^4+x^3+1).\]

We have two possibilities for $g(X)$ here. In one case $g(X) = X -1$
and in the other is $X -1$ times one of the degree $8$ factors. By
choosing $h(X,\eta)$ appropriately in these cases we get a
$[[17,1,7]]$ and a $[[17,9,1]]$ code respectively. We skip the
details for lack of space. To the best of our knowledge, these are new
codes.

\subsection{Decoding $4^m+1$-codes within the BCH limit}

Let $\mathcal{C}$ be $4^m +1$-code with BCH distance $d = 2t +1$. Much
like in the classical case, we show that there is an efficient quantum
algorithm to correct any quantum error of weight at most $t$. Here by
efficient we mean polynomial in the code length $n = 4^m +1$. There
are two key algorithms that we use: (1) The quantum phase finding
algorithm and (2) The Berlekamp decoding algorithm for classical BCH
codes.

The decoding algorithm for classical BCH code can be seen as follows:

\begin{theorem}[Berlekamp]\label{thm:berelekemp-recast}
  Let $g(X)$ be a factor of $X^n - 1$ of BCH distance $d =2 t +1$ over
  a finite field $\mathbb{F}_q$, $q$ and $n$ coprime. Let $e(X)$ be
  any polynomial of weight at most $t$ over $\mathbb{F}_p$. Given a
  polynomial $r(X) = e(X) \mod g(X)$, there is a polynomial time
  algorithm to recover $e(X)$.
\end{theorem}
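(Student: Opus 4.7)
The plan is to reduce this statement to the classical Berlekamp--Massey decoding algorithm by recasting the remainder $r(X) = e(X) \bmod g(X)$ as a set of syndromes. Since the BCH distance of $g$ is $d = 2t+1$, there exist a primitive $n$-th root $\beta$ and an integer $l$ such that $\beta^l, \beta^{l+1}, \ldots, \beta^{l+2t-1}$ are all roots of $g(X)$. The first step I would take is to compute the $2t$ field elements $S_i = r(\beta^{l+i})$ for $i = 0, 1, \ldots, 2t-1$ by direct evaluation of the given polynomial $r(X)$; each evaluation is $O(n)$ field operations, and because $\beta^{l+i}$ is a root of $g$, we have $S_i = e(\beta^{l+i}) \bmod g(\beta^{l+i}) = e(\beta^{l+i})$. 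Thus from $r(X)$ alone we recover exactly the syndrome sequence of $e(X)$ with respect to the BCH defining set.

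Next, since $e(X)$ has weight at most $t$, I would invoke the standard Berlekamp--Massey algorithm on the syndrome sequence $S_0, S_1, \ldots, S_{2t-1}$ to produce the error locator polynomial $\Lambda(X) = \prod_{j \in J}(1 - \beta^{i_j} X)$, where $J$ is the set of positions at which $e(X)$ has nonzero coefficients; this runs in $O(t^2)$ field operations. A Chien search then yields the error positions $\{i_j\}$ as the inverses of the roots of $\Lambda$ in the $n$-th roots of unity. Finally, one applies Forney's formula (or equivalently solves the resulting small Vandermonde system determined by the syndromes and the now known error positions) to recover the error magnitudes, thereby reconstructing $e(X)$ in full.

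The correctness of the entire pipeline rests on the classical theorem that, whenever the true error weight does not exceed $\lfloor (d-1)/2 \rfloor = t$, the Berlekamp--Massey output is unique and correct; this is where the hypothesis on $\weight{e} \le t$ is consumed. The only genuinely new observation needed for the statement above is the opening reduction: in the textbook presentation one starts with a received word $y$ and computes syndromes as $y(\beta^{l+i})$, but since $c(X) = y(X) - e(X)$ is a multiple of $g(X)$ (the codeword part contributes nothing at roots of $g$), knowing $y \bmod g$, or equivalently knowing $e \bmod g$, provides exactly the same syndrome information. No step of the plan presents a genuine obstacle; the only point that deserves care is to confirm that the polynomial $r(X)$ has degree less than $\deg g$ so that the evaluations $r(\beta^{l+i})$ are well-defined from the given data, which is immediate from its being a remainder modulo $g(X)$.
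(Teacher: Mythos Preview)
Your proposal is correct and follows essentially the same approach as the paper: both reduce the problem to the classical BCH decoding task by observing that $r(X)=e(X)+c(X)$ for some codeword $c(X)$ in the cyclic code generated by $g(X)$, so recovering $e(X)$ from $r(X)$ is exactly the problem of decoding a received word with at most $t$ errors. The paper's proof sketch stops at that reduction and simply cites Berlekamp's algorithm, whereas you unpack the algorithm into its standard components (syndrome evaluation at the roots of $g$, Berlekamp--Massey, Chien search, Forney's formula); this is more explicit but not a different idea.
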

\begin{proof}[Proof sketch]
  Since $r(X) = e(X) \mod g(X)$ we have $r(X) = e(X) + c(X)$ for some
  $c(X)$ in the ideal generated by $g(X)$. Note that $c(X)$ is a
  valid, but as of now unknown, code word in the code generated by
  $g(X)$. We can think of the computational task of recovering $e(X)$
  as that task of recovering the sent message $c(X)$ for a received
  message $r(X)$ for which there is an efficient algorithm due to
  Berlekamp~\cite[p-98,6.7]{vLin}.
\end{proof}

For the rest of the section fix a $4^m +1$-code $\mathcal{C}$ with BCH
distance $d = 2t +1$. Let $n =4^m +1$ denote its length and assume
that $(g,h)$ is a generating pair for $\mathcal{C}$. Assume that we
have transmitted a quantum message $\ket{\varphi} \in \mathcal{C}$
over the quantum channel and received the corrupted state $\ket{\psi}
= U_\mathbf{a} V_\mathbf{b} \ket{\varphi}$, where the vectors
$\mathbf{a}$ and $\mathbf{b}$ are unknown to us. If we can design an
algorithm that recovers $\mathbf{a}$ and $\mathbf{b}$ without actually
disturbing $\ket{\psi}$, then we have an error correction algorithm as
we recover the sent message by applying the inverse map
$V_\mathbf{b}^\dag U_\mathbf{a}^\dag$ on $\ket{\psi}$. We show that
this is possible provided the joint weight
$\weight{\mathbf{a},\mathbf{b}} \leq t$.

Consider the polynomial $e(X,\eta) = a(X^{-1}) + \eta b(X^{-1})$ as a
polynomial over $\mathbb{F}_4[X]/X^n -1$. Clearly, the
$\mathbb{F}_4$-weight of $e(X,\eta)$ is also at most $t$ and
polynomials $a(X)$, $b(X)$ can be recovered once $e(X,\eta)$ is
recovered. We prove that $e(X,\eta)$ can be recovered modulo
$h(X,\eta)$.

Let $S$ be the underlying totally isotropic set associated with
$\mathcal{C}$. Since $(g,h)$ is the generating pair for $S$, the
factor $g(X)h(X,\eta)$ of $X^n -1$ generates $S$ as an ideal of
$\mathbb{F}_4[X]/X^n -1$. We have the following proposition

\begin{proposition}\label{prop:compute-poly-iso}
  For any $(\mathbf{c},\mathbf{d})$ in $S$, there is an efficient
  quantum algorithm to compute the polynomial $d(X) a(X^{-1} -
  c(X)b(X^{-1})$.
\end{proposition}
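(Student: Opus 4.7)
The plan is to exploit the fact that although $\ket{\psi} = U_{\mathbf{a}} V_{\mathbf{b}} \ket{\varphi}$ is corrupted, it is still a joint eigenvector of every element of the Gottesman subgroup $\mathcal{S}$ associated with $\mathcal{C}$, with eigenvalues that encode the symplectic inner products of $(\mathbf{a},\mathbf{b})$ against each stabiliser. I will extract these eigenvalues non-destructively via quantum phase estimation and reassemble them into the polynomial $d(X)a(X^{-1}) - c(X)b(X^{-1})$.

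First, since $S$ is separately cyclic (Proposition~\ref{prop:stab-cyclic}), the shifted pair $(N^k\mathbf{c}, N^k\mathbf{d})$ lies in $S$ for every $k = 0,\ldots,n-1$, and so there is a corresponding stabiliser element $W_k = \alpha_k U_{N^k\mathbf{c}} V_{N^k\mathbf{d}} \in \mathcal{S}$, with $\alpha_k$ given explicitly by Theorem~\ref{thm:iso-stab-connection}. Using the Weyl commutation relation $V_{\mathbf{y}} U_{\mathbf{x}} = \zeta^{\transpose{\mathbf{y}}\mathbf{x}} U_{\mathbf{x}} V_{\mathbf{y}}$ together with $W_k \ket{\varphi} = \ket{\varphi}$, a short calculation gives
\[
W_k \ket{\psi} \;=\; \zeta^{\symp{(\mathbf{a},\mathbf{b})}{(N^k\mathbf{c},\, N^k\mathbf{d})}} \ket{\psi},
\]
so $\ket{\psi}$ is an exact eigenvector of $W_k$ whose eigenvalue exponent is $s_k := \transpose{\mathbf{a}} N^k \mathbf{d} - \transpose{\mathbf{b}} N^k \mathbf{c}$.

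Next, I translate the $s_k$'s into coefficients of the required polynomial. The observation recalled in Section~\ref{sec:quantum-cyclic-codes} states that the coefficient of $X^j$ in $u(X) v(X^{-1}) \bmod X^n - 1$ is $\transpose{\mathbf{u}} N^j \mathbf{v}$. Combined with the identity $\transpose{\mathbf{a}} N^k \mathbf{d} = \transpose{\mathbf{d}} N^{-k} \mathbf{a}$, which follows from $\transpose{N} = N^{-1}$, this identifies $s_k$ with the coefficient of $X^{n-k}$ in $d(X) a(X^{-1}) - c(X) b(X^{-1})$. Hence recovering the scalars $s_0,\ldots,s_{n-1}$ is equivalent to recovering the whole polynomial.

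Finally, each $s_k$ is read off by applying quantum phase estimation to $W_k$ with $\ket{\psi}$ as the eigenstate input. Each $W_k$ is, up to the easily computable scalar $\alpha_k$, a tensor product of single-qudit Weyl operators, so its controlled version is realised by an $O(n)$-size circuit. Because the eigenvalue $\zeta^{s_k}$ is an exact $p$-th root of unity with $p$ small and known, the QFT-based phase estimation over $\mathbb{Z}_p$ with $O(\log p)$ ancillas returns $s_k$ with probability one and leaves $\ket{\psi}$ untouched. Iterating over $k = 0,\ldots,n-1$ produces the full polynomial in $\mathrm{poly}(n)$ time. The main (and essentially only) subtlety is the non-destructiveness of the successive phase-estimation rounds: this hinges on the fact that $\mathcal{S}$ is Abelian, so that $\ket{\psi}$ is a \emph{simultaneous} exact eigenvector of all the $W_k$, and that the eigenvalues are exact roots of unity of known order, so the ancilla register factors cleanly at every step and $\ket{\psi}$ is preserved for the next iteration.
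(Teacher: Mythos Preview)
Your argument is correct and follows the same approach as the paper: use that $(N^k\mathbf{c},N^k\mathbf{d})\in S$ for all $k$, read off the symplectic inner products $\transpose{\mathbf{a}}N^k\mathbf{d}-\transpose{\mathbf{b}}N^k\mathbf{c}$ as eigenvalue phases of the corresponding stabiliser elements on $\ket{\psi}$ via phase estimation, and identify these with the coefficients of $d(X)a(X^{-1})-c(X)b(X^{-1})$. Your version is in fact more careful than the paper's sketch, spelling out the reindexing $k\mapsto n-k$ via $\transpose{N}=N^{-1}$ and the non-destructiveness of the measurement.
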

\begin{proof}
  Recall that the code $\mathcal{C}$ is the set of vectors stabilised
  by the corresponding Gottesman subgroup $\mathcal{S}$. Let $U =
  \zeta U_{\mathbf{c}} V_{\mathbf{d}}$ be the element in $\mathcal{S}$
  corresponding to the pair $(\mathbf{c},\mathbf{d}) \in S$. It can be
  easily show that $\ket{\psi}$ is an eigen vector of $U$ with eigen
  value $(-1)^{\transpose{\mathbf{d}}\mathbf{a} -
    \transpose{c}\mathbf{b}}$ and using phase finding one can recover
  the inner product $(-1)^{\transpose{\mathbf{d}}\mathbf{a} -
    \transpose{c}\mathbf{b}}$. Repeating the algorithm with
  $(N^k\mathbf{c},N^k\mathbf{d})$, all the inner products
  $\transpose{\mathbf{d}}N^k \mathbf{a} - \transpose{\mathbf{c}}N^k
  \mathbf{b}$ can be recovered. Since these are precisely the
  coefficients the polynomial $d(X) a(X^{-1} - c(X)b(X^{-a})$ modulo
  $X^n -1$, this is sufficient to prove the claim.
\end{proof}

Since $\tilde{g} = gh$ generate the set $S$ as an ideal, both
$\tilde{g}$ and $\eta\tilde{g}$ belong to $S$. Using the algorithm in
Proposition~\ref{prop:compute-poly-iso} with $\tilde{g}$ and
$\eta\tilde{g}$, it is straight forward but tedious to show that the
polynomial $e(X,\eta)$ can be computed modulo $h(X,\eta)$. We are now
in the setting of Theorem~\ref{thm:berelekemp-recast} where
$h(X,\eta)$ as a polynomial in $\mathbb{F}_4[X]/X^n -1$ playing the
role of the generator polynomial. Since $h(X,\eta)$ has BCH distance
$2t+1$, we can recover $e(X,\eta)$ and hence $(\mathbf{a},\mathbf{b})$
using the Berlekamp algorithm. Thus we have the following theorem

\begin{theorem}
  Let $\mathcal{C}$ be a $4^m+1$ code of length $n=4^m +1$ and BCH
  distance $d = 2t+1$. There is quantum algorithm that takes time
  polynomial in $n$ to correct errors of weight at most $t$.
\end{theorem}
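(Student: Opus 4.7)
The plan is to mimic classical BCH decoding: from the corrupted state $\ket\psi = U_{\mathbf a}V_{\mathbf b}\ket\varphi$, extract a ``syndrome'' revealing the error polynomial modulo $h(X,\eta)$, invoke Berlekamp to lift that residue to the full error polynomial, and then undo the error by applying $V_{\mathbf b}^\dag U_{\mathbf a}^\dag$ to the stored state. The key quantum feature is that syndrome extraction proceeds by phase estimation on stabiliser elements, which act as the identity on $\mathcal C$ and therefore leave the encoded information intact.

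Concretely, the underlying totally isotropic set $S$ is the $\mathbb{F}_4$-ideal of $\Ra$ generated by $\tilde g = g\cdot h(X,\eta)$, so both $\tilde g$ and $\eta\tilde g$ lie in $S$. Writing $\tilde g = c_0(X)+\eta d_0(X)$ with $c_0,d_0\in\mathbb{F}_2[X]$, applying Proposition~\ref{prop:compute-poly-iso} to the pairs $(c_0,d_0)$ and $(d_0,\,c_0+d_0)$ yields
\[
P_1(X) = d_0\,a(X^{-1})+c_0\,b(X^{-1}),\qquad
P_2(X) = (c_0+d_0)\,a(X^{-1})+d_0\,b(X^{-1})
\]
in $\mathbb{F}_2[X]/(X^n-1)$, each obtainable by $O(n)$ rounds of quantum phase estimation on $\ket\psi$. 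A short calculation in $\mathbb{F}_4$ (using $1+\eta=\eta'$ and $\eta\eta'=1$ in characteristic $2$) then gives the identity
\[
\eta\,P_1(X) + P_2(X) \;\equiv\; \sigma(\tilde g)(X)\cdot e(X,\eta) \pmod{X^n-1},
\]
where $e(X,\eta)=a(X^{-1})+\eta b(X^{-1})$ has $\mathbb{F}_4$-weight at most $\weight{\mathbf a,\mathbf b}\le t$.

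By Theorem~\ref{thm:main-construction} the three factors $g$, $h(X,\eta)$, $\sigma(h(X,\eta))$ of $X^n-1$ are pairwise coprime, so $\sigma(\tilde g) = g\,\sigma(h)$ is a unit modulo $h(X,\eta)$. Reducing the displayed identity modulo $h(X,\eta)$ and multiplying by the inverse of $\sigma(\tilde g)$ produces $e(X,\eta)\bmod h(X,\eta)$. Since $h(X,\eta)$ has BCH distance $2t+1$, Theorem~\ref{thm:berelekemp-recast} recovers $e(X,\eta)$ from this residue in classical polynomial time; substituting $X\mapsto X^{-1}$ reads off $\mathbf a$ and $\mathbf b$, after which $V_{\mathbf b}^\dag U_{\mathbf a}^\dag$ is applied to $\ket\psi$ to restore $\ket\varphi$.

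The main technical hurdle is the algebraic identity connecting $(P_1,P_2)$ to $\sigma(\tilde g)\,e$: this is the ``tedious'' step flagged informally in the paper, and it is where the $\mathbb{F}_2$-valued inner products produced by Proposition~\ref{prop:compute-poly-iso} must be reassembled coherently into the $\mathbb{F}_4$-picture that makes the BCH distance of $h$ meaningful. The remaining ingredients --- non-destructive phase estimation on stabiliser eigenvectors, coprimality of $g$, $h$, $\sigma(h)$, classical Berlekamp decoding over $\mathbb{F}_4$, and a single final Weyl-inverse --- are either standard or immediate, and the whole procedure runs in $\mathrm{poly}(n)$ time.
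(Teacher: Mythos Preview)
Your proposal is correct and follows essentially the same route as the paper: extract syndromes via phase estimation on the stabiliser elements $\tilde g$ and $\eta\tilde g$, reassemble them into $e(X,\eta)\bmod h(X,\eta)$, and finish with Berlekamp decoding over $\mathbb{F}_4$. The paper dismisses the reassembly step as ``straightforward but tedious''; you have actually carried it out via the clean identity $\eta P_1 + P_2 \equiv \sigma(\tilde g)\,e(X,\eta)$, which is a nice explicit realisation of what the paper left implicit.
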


\section{Cyclic codes that are not stabiliser}
\label{sec:nonstabiliser}

In this section we give examples for certain nonstabiliser codes that
are cyclic. There has been some work in the construction of
nonstabiliser quantum code. Rains \etal~\cite{rains97nonadditive}
used computer search to construct the first example of a $((5,6,2))$
quantum code which is not stabiliser. Shortly, Roychowdhury and
Vatan~\cite{vwani99nonadditive} gave few more examples of such codes.
Arvind \etal~\cite{arvind:2004:nonstabilizer} gave a generic method
to construct quantum codes for Gottesman subgroups of the error group,
some of which turn out to be nonstabiliser. We summarise their result
in the following proposition.

\begin{proposition}[\cite{arvind:2004:nonstabilizer}]
  \label{prop:pseudo-stab}
  Let $\mathcal{S}$ be a Gottesman subgroup of the error group. Then
  \begin{enumerate}
  \item For any character $\chi$ of $\mathcal{S}$, the set
    $\mathcal{S}_\chi = \{ \chi(s) s | s \in \mathcal{S} \}$ is also a
    Gottesman subgroup of the error group with $P_\chi = \sum_{s\in
      \mathcal{S}} \chi(s) s$ as the corresponding stabiliser code.
  \item The codes $P_\chi$ and $P_\varphi$ are orthogonal unless $\chi
    = \varphi$.
  \item An element $A$ in the algebra $\mathbb{C}[\mathcal{S}]$ is a
    projection if and only if there is a subset $B$ of characters of
    $\mathcal{S}$ such that $A = \sum_{\chi\in B}P_\chi$.
  \end{enumerate}
\end{proposition}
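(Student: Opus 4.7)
The plan is to establish each of the three items using the character theory of the finite abelian group $\mathcal{S}$, noting that $\mathcal{S}$ must be abelian since it is a Gottesman subgroup.

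For Part 1, I would first check that the multiplicative translation $s \mapsto \chi(s)\,s$ is a group homomorphism from $\mathcal{S}$ into the error group, so $\mathcal{S}_\chi$ is an abelian subgroup. The only nontrivial condition is that $\mathcal{S}_\chi$ contains no $\omega I$ for nontrivial root of unity $\omega$: if $\chi(s)\,s = \omega I$, then $s$ itself is a scalar in the error group, and the Gottesman condition on $\mathcal{S}$ forces $s = I$ and hence $\omega = 1$. Having shown $\mathcal{S}_\chi$ is Gottesman, I would identify $P_\chi$ as its code projector by the direct computation
\[
\chi(t)\,t \cdot P_\chi \;=\; \sum_{s \in \mathcal{S}} \chi(ts)\,(ts) \;=\; P_\chi,
\]
so every element of $\mathcal{S}_\chi$ fixes $P_\chi$, and an analogous reindexing gives $P_\chi^{2} = |\mathcal{S}|\,P_\chi$. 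Hence $P_\chi$ is the (unnormalised) projector onto the code stabilised by $\mathcal{S}_\chi$ in the same convention as Theorem~\ref{thm:iso-stab-connection}.

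For Part 2, I would compute $P_\chi P_\varphi$ directly. Substituting $u = st$ in the double sum and using $\varphi(s)^{-1} = \overline{\varphi(s)}$ yields
\[
P_\chi P_\varphi \;=\; \left(\sum_{s \in \mathcal{S}} \chi(s)\,\overline{\varphi(s)}\right) \sum_{u \in \mathcal{S}} \varphi(u)\,u.
\]
The scalar factor is exactly the standard inner product of two characters of the finite abelian group $\mathcal{S}$, which vanishes unless $\chi = \varphi$. Orthogonality of the code subspaces follows.

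For Part 3, one direction is immediate from Parts 1 and 2: if $A = \sum_{\chi \in B} |\mathcal{S}|^{-1} P_\chi$, then the cross terms vanish and each normalised summand is idempotent, so $A^{2} = A$. For the converse, I would invoke the Wedderburn decomposition $\mathbb{C}[\mathcal{S}] \cong \bigoplus_{\chi} \mathbb{C}$ (valid because $\mathcal{S}$ is finite abelian), whose primitive idempotents are exactly the normalised characters $e_\chi = |\mathcal{S}|^{-1} P_\chi$. Any element of $\mathbb{C}[\mathcal{S}]$ then admits a unique expansion $\sum_\chi c_\chi e_\chi$, and idempotency of $A$ forces $c_\chi^{2} = c_\chi$, i.e.\ $c_\chi \in \{0,1\}$. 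Consequently the projections in $\mathbb{C}[\mathcal{S}]$ are exactly the subset sums $\sum_{\chi \in B} e_\chi$.

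The main bookkeeping concern is the paper's convention, under which $\sum_{U \in \mathcal{S}} U$ is itself called \emph{the} projection; the identity $P_\chi^{2} = |\mathcal{S}|\,P_\chi$ must be interpreted consistently with this scaling throughout Parts 1 and 3. Beyond that, the proposition is a direct application of the character theory of finite abelian groups, and no step is genuinely difficult.
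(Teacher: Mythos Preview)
The paper does not give its own proof of this proposition: it is quoted from Arvind \etal~\cite{arvind:2004:nonstabilizer} and stated as a summary of results established there. So there is nothing in the present paper to compare your argument against.

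That said, your outline is the standard character-theoretic derivation and is essentially correct. A couple of small points are worth tightening. In Part~1 you should note that the values of a character of $\mathcal{S}$ actually lie in the scalar subgroup of the error group (powers of $\zeta$, resp.\ $\iota$), so that $\chi(s)s$ is genuinely an element of $\mathcal{E}$; this follows because a Gottesman subgroup meets the centre of $\mathcal{E}$ only in $I$ and hence is an elementary abelian $p$-group (exponent~$4$ when $p=2$). In Part~3 your observation that $c_\chi^2=c_\chi$ already forces $c_\chi\in\{0,1\}$ over $\mathbb{C}$ is exactly right, so self-adjointness need not be invoked separately. Your final remark about the normalisation convention is well taken: with the paper's unnormalised $P_\chi$ one has $P_\chi^2=|\mathcal{S}|\,P_\chi$, so the statement of item~3 must be read with the implicit factor $|\mathcal{S}|^{-1}$ you supply.
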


We call the codes thus generated from Gottesman subgroups,
\emph{pseudo-stabiliser} codes.  In particular, Arvind \emph{et
  al}~\cite{arvind:2004:nonstabilizer} have show that the $((5,6,2))$
code of Rains \etal~\cite{rains97nonadditive} is a
pseudo-stabiliser code.

Consider any Gottesman subgroup $\mathcal{S}$ that is separately
cyclic.  Then the corresponding stabiliser code $\mathcal{C}$ is
cyclic. For any character $\chi$ of $\mathcal{S}$, $\mathcal{S}_\chi$
is also separately cyclic as the underlying totally isotropic subspace
$S$ of $\mathbb{F}_p^{2n}$ is same for $\mathcal{S}$ and
$\mathcal{S}_\chi$. Hence the corresponding stabiliser code
$\mathcal{C}_\chi$ is cyclic and therefore $\N^\dag P_\chi \N =
P_\chi$. By Proposition~\ref{prop:pseudo-stab}, any pseudo-stabiliser
code with support in $\mathcal{S}$ is given by the sum of projections
$P_B = \sum_{\chi\in B} P_\chi$. Thus we have $\N^\dag P_B \N =
P_B$. As a result we have the following proposition.

\begin{proposition}
  A pseudo-stabiliser code whose underlying Gottesman subgroup is cyclic is
  also cyclic.
\end{proposition}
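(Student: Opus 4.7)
The plan is to verify that the projection $P_B$ of any pseudo-stabiliser code supported on a cyclic Gottesman subgroup $\mathcal{S}$ commutes with the shift operator $\N$; cyclicity of the code then follows from the characterisation of cyclic codes via their projection operators proved at the start of Section~\ref{sec:quantum-cyclic-codes}.

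First I would unpack the hypothesis. Saying that $\mathcal{S}$ is cyclic means that the underlying totally isotropic subspace $S \subseteq \mathbb{F}_p^{2n}$ is separately cyclic, so by Proposition~\ref{prop:stab-cyclic} the associated stabiliser code is cyclic. For any character $\chi$ of $\mathcal{S}$, the twisted set $\mathcal{S}_\chi = \{\chi(s)\,s \mid s \in \mathcal{S}\}$ differs from $\mathcal{S}$ only by scalar phases, so it has the \emph{same} underlying totally isotropic image in $\mathbb{F}_p^{2n}$, namely $S$. Hence $\mathcal{S}_\chi$ is itself a cyclic Gottesman subgroup, and applying Proposition~\ref{prop:stab-cyclic} again tells us that the corresponding stabiliser code $\mathcal{C}_\chi$ with projector $P_\chi = \sum_{s\in\mathcal{S}}\chi(s)\,s$ is cyclic. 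Equivalently, by the first proposition of the section, $\N^\dag P_\chi \N = P_\chi$ for every character $\chi$.

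Next I would invoke part~3 of Proposition~\ref{prop:pseudo-stab}: every pseudo-stabiliser projection supported on $\mathcal{S}$ has the form $P_B = \sum_{\chi \in B} P_\chi$ for some set $B$ of characters. Linearity of conjugation by $\N$ then yields
\[
\N^\dag P_B \N \;=\; \sum_{\chi \in B} \N^\dag P_\chi \N \;=\; \sum_{\chi \in B} P_\chi \;=\; P_B,
\]
and a final appeal to the commutator characterisation of cyclic quantum codes finishes the argument.

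Given the machinery already assembled, this is essentially a bookkeeping exercise. The only place requiring any thought is the observation that twisting $\mathcal{S}$ by a character leaves its image in $\mathbb{F}_p^{2n}$ unchanged, so that cyclicity of the isotropic support descends to each $\mathcal{S}_\chi$ simultaneously; I do not expect any real obstacle beyond making this remark precise.
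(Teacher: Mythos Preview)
Your argument is correct and matches the paper's own proof essentially step for step: observe that each $\mathcal{S}_\chi$ shares the same totally isotropic image $S$ and hence each $P_\chi$ satisfies $\N^\dag P_\chi \N = P_\chi$, then sum over $\chi\in B$ to conclude $\N^\dag P_B \N = P_B$. The paper presents exactly this reasoning in the paragraph preceding the proposition.
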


The above proposition gives us ways to construct cyclic
pseudo-stabiliser code by first constructing a cyclic stabiliser code.
In particular, the $((5,6,2))$ code is a pseudo-stabiliser code whose
underlying Gottesman subgroup is separately cyclic.  This gives a
concrete example for a nonstabiliser cyclic quantum code.

\bibliography{bibdata} 
\bibliographystyle{abbrv}

\end{document}